\DeclareMathOperator*{\argmin}{arg\,min}
\begin{document}

\title[Short Title]{A Vertex Cut based Framework for Load Balancing and Parallelism Optimization in Multi-core Systems }         


\author{Guixiang Ma}
\orcid{nnnn-nnnn-nnnn-nnnn}             
\affiliation{
  \institution{Intel Labs}            
}
\email{guixiang.ma@intel.com}          
\author{Yao Xiao}
\affiliation{
  \institution{University of Southern California}           
}
\email{xiaoyao@usc.edu}         

\author{Theodore L. Willke}
\affiliation{
  \institution{Intel Labs}           
}
\email{ted.willke@intel.com}

\author{Nesreen K. Ahmed}
\affiliation{
  \institution{Intel Labs}           
}
\email{nesreen.k.ahmed@intel.com} 

\author{Shahin Nazarian}
\affiliation{
  \institution{University of Southern California}           
}
\email{shahin.nazarian@usc.edu} 

\author{Paul Bogdan}
\affiliation{
  \institution{University of Southern California}           
}
\email{pbogdan@usc.edu} 

\begin{abstract}
High-level applications, such as machine learning, are evolving from simple models based on multilayer perceptrons for simple image recognition to much deeper and more complex neural networks for self-driving vehicle control systems. 
The rapid increase in the consumption of memory and computational resources by these models demands the use of multi-core parallel systems to scale the execution of the complex emerging applications that depend on them.  
However, parallel programs running on high-performance computers often suffer from data communication bottlenecks, limited memory bandwidth, and synchronization overhead due to irregular critical sections. 
In this paper, we propose a framework to reduce the data communication and improve the scalability and performance of these applications in multi-core systems.
We design a vertex cut framework for partitioning LLVM IR graphs into clusters while taking into consideration the data communication and workload balance among clusters. First, we construct LLVM graphs by compiling high-level programs into LLVM IR, instrumenting code to obtain the execution order of basic blocks and the execution time for each memory operation, and analyze data dependencies in dynamic LLVM traces. Next, we formulate the problem as Weight Balanced $p$-way Vertex Cut, and propose a generic and flexible framework, wherein four different greedy algorithms are proposed for solving this problem. Lastly, we propose a memory-centric run-time mapping of the linear time complexity to map clusters generated from the vertex cut algorithms onto a multi-core platform. This mapping takes into consideration cache coherency and data communication to improve the application performance. We conclude that our best algorithm, WB-Libra, provides performance improvements of 1.56x and 1.86x over existing state-of-the-art approaches for 8 and 1024 clusters running on a multi-core platform, respectively.

\end{abstract}

\begin{CCSXML}
<ccs2012>
<concept>
<concept_id>10011007.10011006.10011008</concept_id>
<concept_desc>Software and its engineering~General programming languages</concept_desc>
<concept_significance>500</concept_significance>
</concept>
<concept>
<concept_id>10003456.10003457.10003521.10003525</concept_id>
<concept_desc>Social and professional topics~History of programming languages</concept_desc>
<concept_significance>300</concept_significance>
</concept>
</ccs2012>
\end{CCSXML}

\ccsdesc[500]{Software and its engineering~General programming languages}
\ccsdesc[300]{Social and professional topics~History of programming languages}

\keywords{Parallel Programming, Weight Balanced $p$-way Vertex Cut, Memory-Centric Run-Time Mapping, Graph Partitioning, LLVM Graphs, Edge Cut, Power-law Graphs}  

\maketitle

\section{Introduction}
The massive and growing number of complex applications, such as in machine learning and big data \cite{chen2014data}, call for efficient execution to reduce the run-time overhead.
In particular, sequential programs running in single-core systems fail to provide performance improvement. Nevertheless, the parallel execution in multi-core systems is not a cure-all in that it may cause performance degradation due to load imbalance, synchronization overhead, and resource sharing. The performance of parallel execution is determined by the worst execution time among spawned threads. Therefore, load imbalance can severely impact the overall performance. On the other hand, threads compete for the underlying shared hardware resources, which increases synchronization overhead if not properly handled. 

\indent Therefore, it is crucial to study how to optimize the parallel execution of applications in multi-core systems. 
The recent work in this area has focused on fine-grained parallelism and various task-to-core mapping strategies for minimizing the execution overhead (e.g., run-time, communication cost) on multi-core systems and optimizing the execution. For example, \cite{hendrickson1995improved} propose new graph partitioning algorithms based on spectral graph theory to partition coarse-grained dataflow graphs into parallel clusters for mapping large problems onto different nodes while balancing the computational loads. \cite{devine2006parallel} develops hypergraph partitioning algorithms to better model communication requirements and represent asymmetric problems to divide computations into clusters. Moreover, some existing research \cite{murray2013naiad, fetterly2009dryadlinq, murray2011ciel} designs different systems for general-purpose distributed data-parallel computing.


\indent Despite the large number of works in this area~\cite{hendrickson2000graph,hendrickson1995improved,verbelen2013graph,hendrickson1995multi}, very few of them have considered instruction-level fine-grained parallelism, which offers a novel approach to discovering optimal parallelization degree and minimizing data communication in multi-core platforms. In this paper, we explore the instruction-level parallelism using graph partitioning techniques on the low level virtual machine (LLVM) intermediate representation (IR) \cite{lattner2004llvm} graphs and cluster-to-core mapping for optimizing the parallel execution of applications on multi-core systems. 
The recent work in~\cite{xiao2019self,xiao2017load} studies a similar problem and proposed an edge-cut approach that proposes a community detection inspired optimization framework to partition dynamic dependency graphs to automatically parallelize the execution of applications while minimizing the inter-core traffic overhead. 
Although this work has achieved better performance in the multi-core parallelism optimization compared to other baseline methods such as sequential execution framework and thread-based framework, the graph partition approach used in~\cite{xiao2017load} does not consider some important structural properties of the LLVM IR graphs, for example, the power-law degree distribution. This may lead to less-than-ideal graph partitions identified by the optimization model. 
In this paper, we consider the power-law degree distribution when designing a graph partition framework for LLVM graphs, and propose vertex-cut strategies that partition graphs for better load balancing and parallelism in multi-core systems.

Generally, there are three major challenges in designing a vertex cut framework for LLVM IR graphs:
(1) How to formulate the goal of reducing data communication and optimal balanced workloads among multiple cores into the vertex cut graph partitioning problem, (2) How to incorporate edge weights into the vertex cut optimization problem, though most of the existing vertex cut methods are designed for unweighted graphs. However, the LLVM IR graphs are naturally weighted graphs, where vertices represent instructions, edges represent dynamic data dependencies among the instructions, and edge weights represent the estimated execution time for memory operations, which are crucial for measuring the expected workloads for executing instructions, and (3) How to map the graph partitions (i.e., clusters) generated by the vertex-cut approach to system's processors at run-time.
\begin{figure}
\centering
\includegraphics[width=0.98\textwidth, height=0.2\textwidth]{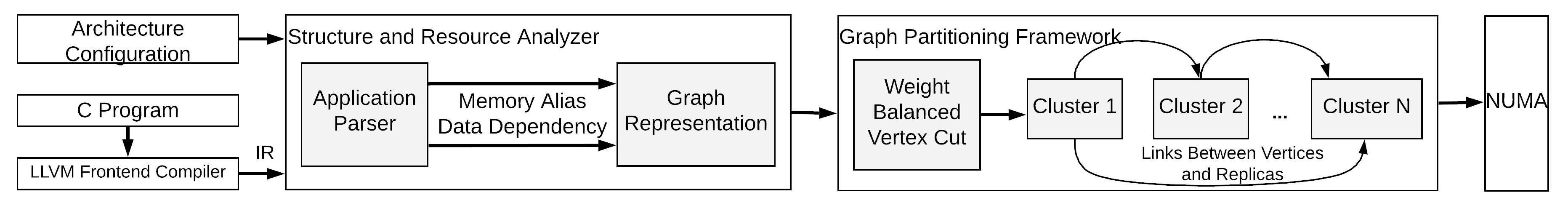}
\vspace{-3mm}
\caption{Overview of the Proposed Framework. We first pass programs into the structure and resource analyzer to construct LLVM graphs which capture spatial and temporal data communication. Next, we propose a vertex-cut based graph partitioning framework for LLVM graphs to obtain balanced clusters with minimized inter-cluster data communication. Finally, we develop a memory-centric run-time mapping to schedule clusters onto a multi-core non-uniform memory access (NUMA) platform.}
\label{fig-intro}
\vspace{-3mm}
\end{figure}

\paragraph{Contributions} To address these challenges, we propose a generic and flexible vertex cut framework on LLVM IR graphs for optimal load balancing and parallel execution of applications on multi-core systems as shown in Fig.~\ref{fig-intro}. 
The proposed framework has an advantage in incorporating power-law degree distribution into graph partitioning and can achieve extremely balanced partitions. Therefore, it is an ideal framework for balanced workloads based on graph partitioning. More specifically, we introduce and formalize a new problem, called the \emph{weight balanced $p$-way Vertex Cut}, by incorporating edge weights into the optimization of vertex cut-based graph partitioning for load balancing. In addition, we propose novel greedy algorithms for solving this problem, and introduce a memory-centric run-time mapping algorithm for mapping the graph clusters to multi-core architectures. Our contributions can be summarized as follows:  
\begin{itemize}
    \item We introduce the vertex-cut graph partition strategy to LLVM IR graphs and propose a vertex cut-based framework for partitioning LLVM graphs, which 
    reduces data communication and achieves optimally balanced workloads amongst a set of cores.
    \item We prove that the formulated optimization problems possesses submodular properties, which enables us to design a greedy algorithm for the optimization problem of the Weight Balanced $p$-way Vertex Cut with optimality guarantees. 
    \item We present a memory-centric run-time mapping algorithm for mapping the graph clusters to multiple cores. 
\end{itemize}

\paragraph{Organization} The rest of the paper is organized as follows. Section 2 describes preliminaries to help understand the paper, including edge-cut and vertex-cut graph partitioning algorithms. Section 3 provides detailed procedures for constructing LLVM graphs. Section 4 discusses the vertex-cut based graph partitioning framework and theoretical analysis. Section 5 presents the NUMA architecture and memory-centric run-time mapping. Section 6 provides simulation setup and experimental results. Finally, we discuss the related work in Section 7 and conclude the paper in Section 8.

\section{Notation \& Preliminaries}
\label{sec:prelim}
In this section, we introduce the notation and provide background for some of the fundamental concepts used throughout this paper. For a summary of notation, see Table~\ref{tab:notations}. 

\begin{table*}[]
\begin{center}
\small
\caption{Summary of Notation}
\vspace{-3mm}
\begin{tabular}{lllll}
\toprule
$G$ & & & & Input LLVM graph\\
$V$& & & & The set of vertices in a graph $G$\\
$E$& & & & The set of edges in a graph $G$\\
$W$& & & & The weight matrix for graph $G$\\
$M(e)$& & & & The set of clusters that contain edge e\\
$A(v)$& & & & The set of clusters that contain vertex v\\
$w_e$& & & & The weight of edge e\\
$\alpha$& & & & The power parameter for the power-law graphs\\
$\lambda$& & & & The edge weight imbalance factor\\
\bottomrule
\end{tabular}

\label{tab:notations}
\end{center}
\end{table*}


\paragraph{\textbf{Power-law Graphs}}
Let $G=(V,E)$ denote a graph, where $V$ is the set of vertices and $E\subseteq V\times V$ is the set of edges in $G$. Graph $G$ is a power-law graph if its degree distribution follows a power law \cite{adamic2002zipf,gonzalez2012powergraph}:
\begin{align}
    \mathbf{P}(d) \propto d^{-\alpha},
\end{align}
where $\mathbf{P}(d)$ is the probability that a vertex has a degree $d$ and $\alpha$ is a positive constant exponent. The power-law degree distribution means that most vertices in the graph have few neighbors while very few vertices have a large number of neighbors. The exponent $\alpha$ controls the "skewness" of the vertex degree distribution, where a higher $\alpha$ implies a lower ratio of edges to vertices. 
Many natural graphs have such power-law degree distributions, such as social networks. The LLVM graphs that we aim to analyze in this paper are also power-law graphs and will be introduced in Sections~\ref{sec:graph} and ~\ref{sec:framework}. Some examples of LLVM graphs are shown in Fig.~\ref{fig:llvm_graphs}. The skewed degree distributions in power-law graphs challenges graph partitioning, especially for LLVM graphs with a goal of balanced clusters and minimized data communication in parallel computing. 

\paragraph{\textbf{Edge-Cut}}
Given a graph $G = (V, E)$, an edge-cut on $G$ is a partition of $V$ into two subsets $S$ and $T$ by cutting some edges in $E$, which results in two clusters with a set of inter-cluster edges $(u,v)\in E | u\in S, v\in T$. Edge-cut based graph partitioning tasks usually have an optimization model such that after a number of edge cuts, the graph is partitioned into a certain number of clusters to satisfy the optimization requirements. Examples of edge-cut based graph partitioning problems include the widely studied max-flow min-cut problem \cite{dantzig2003max} in flow graphs and community detection in social networks \cite{bedi2016community}. In parallel computing, calculations can be considered as graphs where nodes represent a series of computations and edges represent data dependencies. Edge-cut based methods have also been studied in this area for partitioning graphs into interconnected clusters to be mapped onto parallel computers \cite{hendrickson2000graph,hendrickson1995improved,verbelen2013graph}. In this paper, we will also discuss the state-of-the-art edge-cut based methods and apply them as baseline methods for the optimal parallelism and load balancing in  multi-core systems.   

\paragraph{\textbf{Vertex-Cut}}
\begin{figure}
\centering
\begin{subfigure}[t]{0.38\textwidth}{
\centering
\includegraphics[width=0.7\textwidth,center]{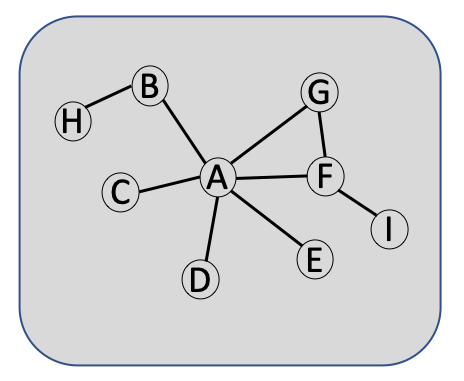}
}
\vspace{-7pt}
\caption{Sample Graph}
\label{fig:Graph}
\end{subfigure}
\hspace{-8pt}
\begin{subfigure}[t]{0.38\textwidth}{
\centering

\includegraphics[width=0.7\textwidth,center]{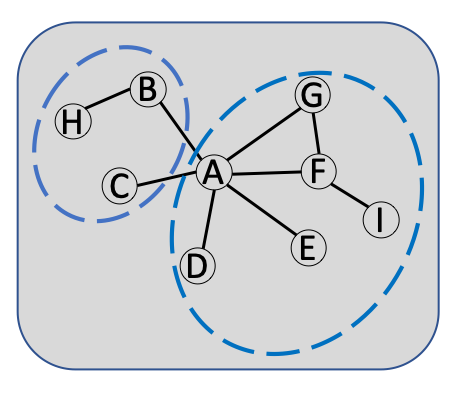}}

\caption{Edge-Cut Strategy 1}
\label{fig:EdgeCut1}
\end{subfigure}

\begin{subfigure}[t]{0.38\textwidth}{
\centering
\includegraphics[width=0.7\textwidth,center]{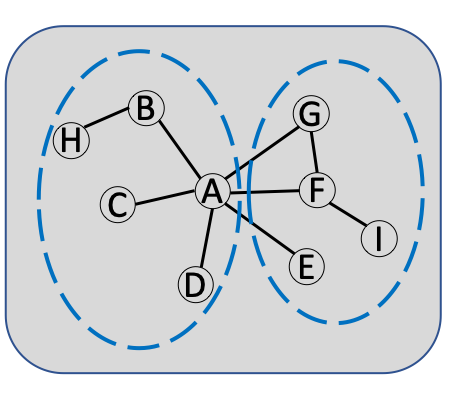}}
\caption{Edge-Cut Strategy 2}
\label{fig:EdgeCut2}
\end{subfigure}\hspace{-10pt}
\begin{subfigure}[t]{0.38\textwidth}{
\includegraphics[width=0.88\textwidth,center]{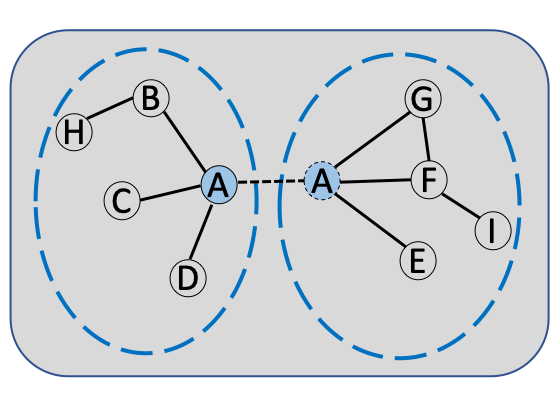}}
\caption{Vertex-Cut Strategy}
\label{fig:vertexcut}
\end{subfigure}
\vspace{-2mm}
\caption{Illustrative Example of Edge Cut vs. Vertex Cut}
\label{fig:graphcut_examples}
\end{figure}

Given a graph $G=(V,E)$, a vertex-cut on $G$ is a partition of $E$ into subsets by cutting some vertices in $E$. Whenever a vertex is cut, this vertex will be replicated and its replica along with a subset of adjacent edges are placed into a different cluster. Instead of having inter-cluster edges like an edge-cut does, vertex-cut partitions only have an inter-cluster connection between each vertex that has been cut and its replicas. Due to these characteristics, a vertex-cut strategy can offer more optimal solutions for some graph partitioning tasks compared to an edge-cut strategy. Fig.~\ref{fig:graphcut_examples} shows a scenario by illustrating the difference between edge-cut and vertex-cut on a sample graph for graph partitioning, with the goal of minimizing inter-cluster communication while balancing the workloads (i.e., the number of edges) among clusters. Since the vertex $A$ in the graph has a high degree while the other vertices have lower degrees, it is challenging for edge-cut approaches to deal with the edges associated with $A$ in order to achieve low inter-cluster communication (i.e., cross-cluster edges) and a good balance between clusters. Fig.~\ref{fig:EdgeCut1} shows an edge-cut strategy with low inter-cluster communication but a high imbalance between clusters, while the strategy in Fig.~\ref{fig:EdgeCut2} achieves a good balance but with more inter-cluster communication cost. On the other hand, the vertex-cut strategy in Fig.~\ref{fig:vertexcut} perfectly addresses the issues by cutting the vertex $A$, where the original $A$ is assigned to the cluster on the left and a replica of $A$ is assigned to the cluster on the right. The connection between $A$ and its replica is the only communication cost between the two clusters and the two clusters are well balanced. These examples demonstrate the advantage of vertex-cuts over edge-cuts on a graph with skewed node degrees, and this motivates us to propose a vertex-cut based graph partitioning framework on power-law LLVM graphs to discover the optimal execution and minimal data communication.

\section{LLVM Graph Construction}
\label{sec:graph}
\begin{figure}
\centering
\includegraphics[width=0.88\textwidth, height=0.18\textwidth]{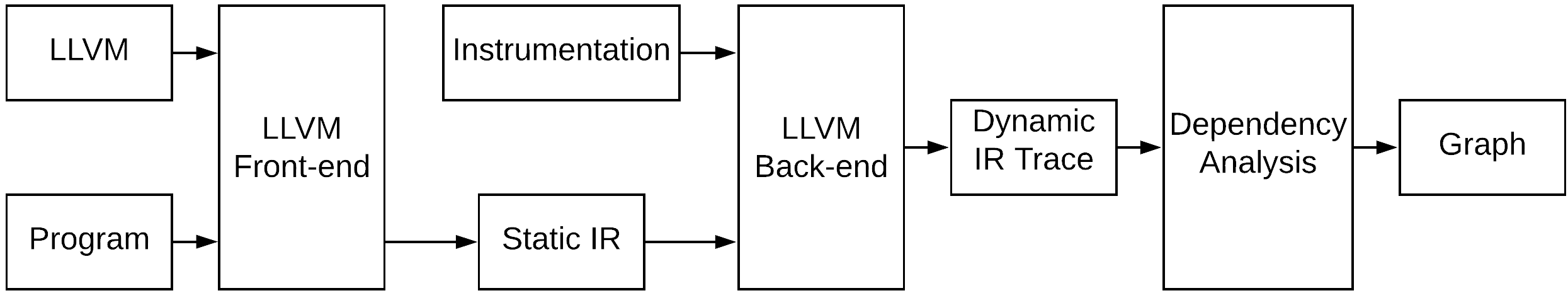} 
\caption{Workflow of LLVM Graph Construction. Each program is first compiled to static IR instructions via the LLVM front-end, which is next translated into dynamic IR trace via the LLVM back-end, combined with instrumentation to obtain information such as memory timing and the sequence of the execution order of basic blocks. Last, we perform dependency analysis to construct a graph based on the dynamic trace where nodes denote IR instructions and edges represent dependencies.}
\vskip -10pt
\label{fig:workflow}
\end{figure}
\begin{figure}[t]
\centering
\includegraphics[width=0.93\textwidth, height=0.88\textwidth]{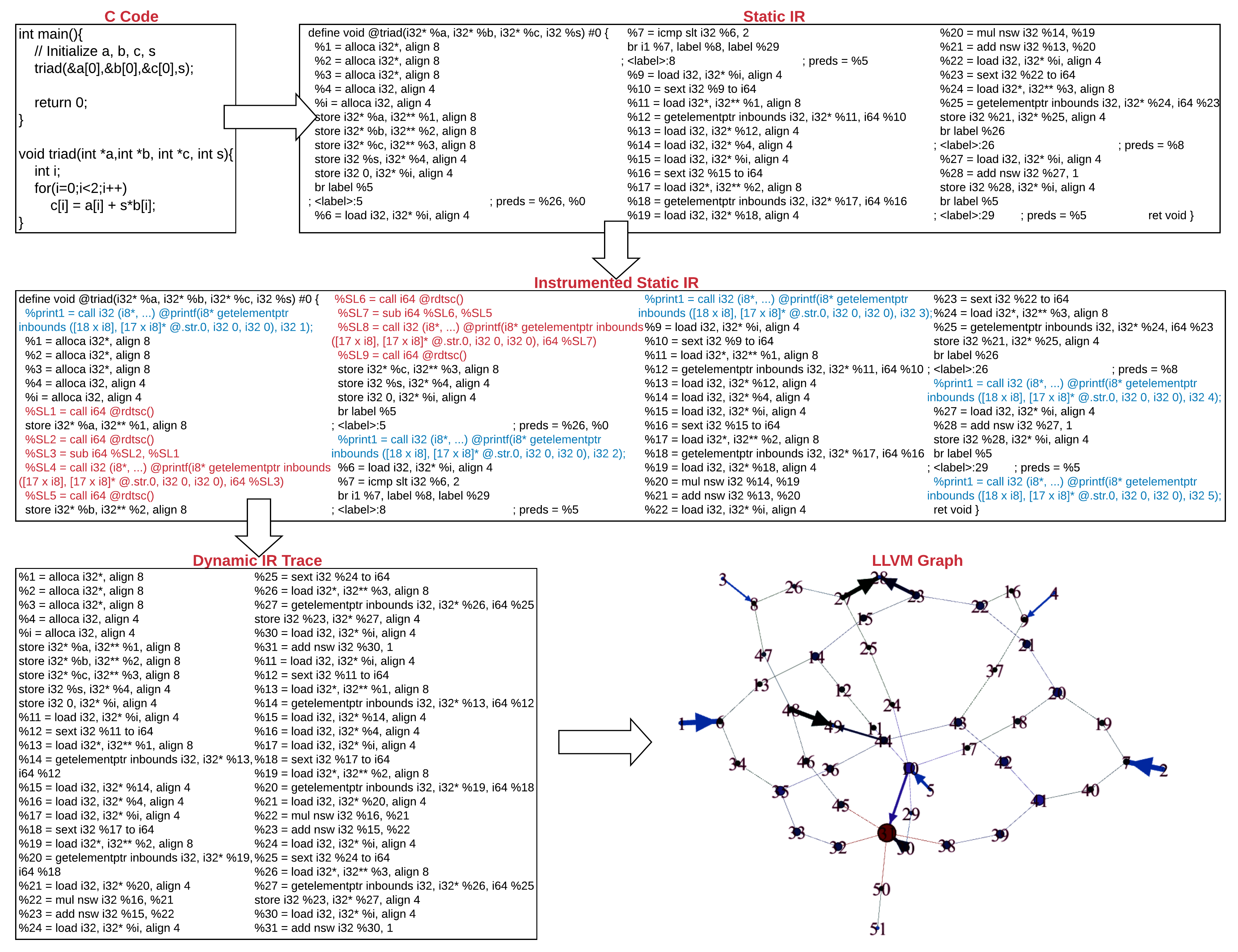} 
\caption{Example of LLVM Graph Construction. This is an example of a graph constructed from a C program followed by the workflow in Fig.~\ref{fig:workflow}. One thing to note is that in instrumented static IR, instructions in blue keep track of basic blocks whereas instructions in red measure time for memory operations. We only show partial instrumentation for memory time measurement.}
\vskip -10pt
\label{fig:example}
\end{figure}
We consider each application as an LLVM graph generated from a dynamic trace. It is a directed and acyclic graph with edge weights, where nodes represent IR instructions, edges represent dynamic data dependencies between nodes, and weights represent time for memory operations. The dataflow representation of LLVM graphs requires the advanced graph partitioning algorithms discussed later to find balanced clusters in parallel computing. In this section, we discuss the workflow of LLVM graph construction in three steps: (1) static IR generation via the LLVM front-end from an input program; (2) dynamic IR generation from static IR combined with instrumentation; (3) LLVM graph construction via dependency analysis. Before delving into the details, we introduce a graph definition to help us understand it.

\begin{definition}\label{def:llvm}
Let $G = (V,E,W)$ denote an LLVM graph, where each node $v \in V$ represents an LLVM IR instruction, $N = |V|$ is the number of nodes,  each edge $e=(u,v) \in E$ represents the data dependency among two nodes, and the corresponding edge weight $w_{uv} \in W$ characterizes the data dependency between node $u$ and node $v$ to guarantee the strict program order. 
\end{definition}

The LLVM graph in Definition~\ref{def:llvm} captures the spatial and temporal data communication, since the weight $w_{uv}$ measures the amount of time required to transfer data from node $u$ to node $v$ during memory operations. Therefore, we could measure the cost of data communication, which facilitates us to propose an optimization model to partition the LLVM graph into clusters while taking into account data transfer among clusters.

\subsection{Static IR Generation}
Instruction set architecture (ISA) dependent traces include different characteristics and constraints for a specific ISA, which cannot satisfy the fast growing hardware specialization and ever-expanding workloads. Therefore, in parallel computing, in order to have well-balanced workloads with non-trivial properties and understand the ISA-independent micro-structures, we first compile high-level programs into static LLVM IR. LLVM is a compiler engine which makes program analysis lifelong and transparent by introducing IR as a common model for analysis, transformation, and synthesis \cite{lattner2004llvm}. IR is an intermediate representation between high-level instructions such as Python/C and low-level assembly. It ignores the low-level hardware details while preserving the dataflow structure of programs, as shown in an example in Fig.~\ref{fig:example}.
\subsection{Dynamic Trace Generation}
Once the static IR code is generated, we instrument the code to obtain information such as basic blocks and memory time. First, we use a hash table to keep track of IR instructions within each basic block. For example, in Fig.~\ref{fig:example}, instructions from "\textit{\%1 = alloca i32*, align 8}" up to "\textit{br label \%5}" should be hashed into the index 1 which represents the first basic block. Second, at the beginning of each basic block, we instrument a \textit{printf} function to record the execution order of blocks. Fig.~\ref{fig:example} shows the full instrumentation of \textit{printf} statements in blue. Last, we use the time stamp counter \textit{rdtsc} and the \textit{printf} statements to measure the amount of time for each memory operation ($time=after_{mem}-before_{mem}$). Instructions in red in Fig.~\ref{fig:example} show this instrumentation for the first two memory operations. Specifically, we insert \textit{rdtsc} before and after each memory operation and calculate the difference as the amount of execution time. Once static IR is instrumented, we use the LLVM back-end to execute it and collect the execution order of basic blocks and the amount of time for each memory operation. Combined with the hash table, which can be indexed from the execution order of basic blocks, we obtain dynamic IR trace as shown in Fig.~\ref{fig:example}. 
\subsection{LLVM Graph Construction}
The dynamic IR trace captures the dataflow nature of high-level programs. In order to understand the hidden communication structure of the trace and processes that can be potentially be processed in parallel, we construct the LLVM graph by analyzing the data and memory alias dependencies. Data dependency analysis identifies source registers and destination registers for each instruction and checks if source registers of the current instruction match with destination registers of the prior ones. Alias analysis is used to determine if two pointers used in memory operations have the same address. For example, the sixth instruction "\textit{store i32* \%a, i32** \%1, align 8}" has the source register \%1 which depends on the destination register of the first instruction "\textit{\%1 = alloca i32*, align 8}". The corresponding LLVM graph manifests this dependency by inserting a directed edge from node $1$ to node $6$.

\section{Vertex-Cut Based Graph Partitioning Framework}

In this section, we introduce our vertex cut framework for partitioning LLVM graphs to optimize the parallel execution of applications in multi-core systems.  

By investigating the degree distribution of the LLVM graphs that we construct following the procedures introduced in Section~\ref{sec:graph} for some applications, we observe that these LLVM graphs are all power-law graphs, such as the examples shown in Fig.~\ref{fig:llvm_graphs} for the Dijkstra algorithm and the fast Fourier transform (FFT) algorithm. The skewed node degree distribution makes the graph partitioning a challenging task on these power-law graphs. As discussed in Section~\ref{sec:prelim}, vertex-cut has some advantages over edge-cut on graphs with skewed node degree distributions. Existing works in graph partitioning for distributed graph computing have also shown that vertex-cut methods can achieve better performance in terms of data communication and balance among the partitions than edge-cut methods for power-law graphs \cite{gonzalez2012powergraph,xie2014distributed}. In \cite{gonzalez2012powergraph}, a vertex-cut approach called PowerGraph is proposed for solving a balanced $p$-way vertex-cut problem, where the objective is to minimize the average number of vertex replicas while keeping the number of edges balanced among different clusters, so as to minimize the data communication among different clusters while balancing their workloads. In \cite{xie2014distributed}, a degree-based vertex-cut method called Libra is proposed, which has shown better performance than PowerGraph for the balanced $p$-way vertex-cut task.     
\label{sec:framework}
\begin{figure}
\centering
\begin{subfigure}[t]{0.45\textwidth}{
\label{fig:dijkstra}
\includegraphics[width=\textwidth]{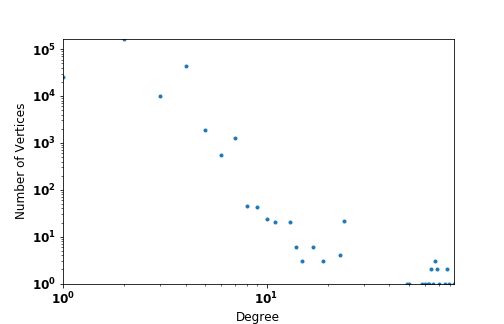}
}
\vspace{-5mm}
\caption{Dijkstra}
\end{subfigure}
\begin{subfigure}[t]{0.45\textwidth}{
\label{fig:fft}
\includegraphics[width=\textwidth]{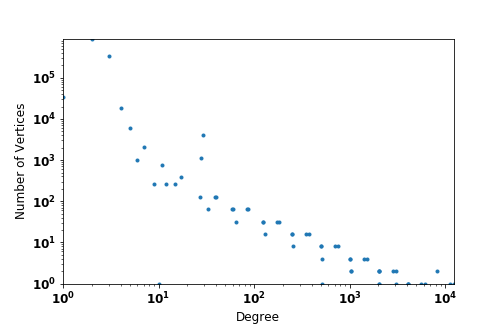}}
\par\bigskip
\vspace{-3mm}
\caption{FFT}
\end{subfigure}
\vspace{-4mm}
\caption{Examples of LLVM Graphs}
\label{fig:llvm_graphs}
\end{figure}
Although these existing vertex-cut methods have been shown effective in graph partitioning for distributed graph computing, they are designed for unweighted graphs, where the goal of a balanced cut is to keep the number of edges balanced on each cluster. However, the LLVM graphs are weighted graphs, where the weights represent the estimated execution time for memory operations, and the goal of a balanced cut is to keep the sum of edge weights in different clusters balanced. Therefore, the existing unweighted vertex cut methods cannot achieve satisfactory performance in the graph partition task on LLVM graphs. In this paper, we formulate the Weight Balanced $p$-way Vertex Cut as a new problem and propose strategies for solving this problem.        

\subsection{Weight Balanced $p$-way Vertex Cut}
\label{sec:wb-vertexcut}
Given an LLVM IR graph $G = (V,E,W)$, our goal is to reduce data communication among different cores (i.e., partitions/clusters) while achieving optimal balanced workloads (i.e., edge weights). We formalize the objective of the weight balanced $p$-way vertex-cut by assigning each edge $e \in E$ to a cluster $M(e) \in \{1,\cdots,p\}$. Each vertex then spans the set of clusters $A(v) \subseteq \{1,\cdots,p\}$ that contain its adjacent edges. We define the objective as: 
\begin{align}
    \min_A \frac{1}{|V|} \sum_{v\in V}|A(v)| \label{eq:optimization}
    \\
\text{s.t.} ~~~{\max_m\sum_{e\in E, M(e)=m} w_e, <  \lambda \frac{w_{avg}|E|}{p}}
\label{eq:constraint}
\end{align}
\noindent where $w_e$ is the weight of edge $e$, $w_{avg}$ is the average edge weight of graph $G$, and the imbalance factor $\lambda \geq 1$ is a small constant.

As previously discussed, the balanced $p$-way vertex cut for unweighted graphs has been studied in some works in the literature \cite{gonzalez2012powergraph,xie2014distributed}. \cite{gonzalez2012powergraph} introduces \textit{PowerGraph} and \cite{xie2014distributed} proposes \textit{Libra}, both of which are state-of-the-art approaches for the vertex-cut task on unweighted graphs. In \textit{PowerGraph} \cite{gonzalez2012powergraph}, randomized vertex cut strategy is first analyzed, based on which a greedy solution is proposed for the edge-placement process of the vertex-cuts. In \cite{xie2014distributed}, a degree-based approach, called \textit{Libra}, is proposed for vertex-cut graph partition, which is based on \textit{PowerGraph} but further distinguishes the higher-degree and lower-degree vertices during an edge placement to achieve better performance. Inspired by \textit{PowerGraph} and \textit{Libra}, in the following sections, we will first perform theoretical analysis on the random vertex cut solution for the proposed weighted balanced vertex cut problem and then provide greedy algorithms for this vertex cut task.   

\subsection{Theoretical Analysis}
\subsubsection{Random Weighted Vertex Cut}
\label{sec:replication}
A simple way to perform vertex-cuts is to randomly assign edges to clusters. Based on \cite{gonzalez2012powergraph}, we derive the expected normalized replication factor (Eq.(~\ref{eq:optimization})) in random weighted vertex cut for the weight balanced $p$-way vertex cut task. 

According to linearity of expectation, we have:
\begin{align}
    \mathbb{E}[\frac{1}{|V|}\sum_{v\in V}|A(v)|] = \frac{1}{|V|}\sum_{v\in V}\mathbb{E}[|A(v)|]
\end{align}
\noindent where $\mathbb{E}[|A(v)|]$ is the expected replication number of a single vertex $v$. 

Assume vertex $v$ has a degree $\mathbf{D}[v]$, then the expected replication of $v$ can be computed by the process of assigning the $\mathbf{D}[v]$ edges that are adjacent to $v$. Let $X_i$ denote the event that vertex $v$ has at least one of its edges on cluster $i$, then the expectation $\mathbb{E}[X_i]$ is:
\begin{align}
    \begin{split}
      \mathbb{E}[X_i] & =  1 - \mathbf{P}(\text{$v$ has no edges on cluster $i$})\\
      & = 1 - (1 - \frac{1}{p})^{\mathbf{D}[v]}
    \end{split}
\end{align}

Then, the expected replication factor for vertex $v$ is: 
\begin{align}
    \mathbb{E}[|A(v)|] = \sum_{i=1}^{p}\mathbb{E}[X_i] = p (1 - (1 - \frac{1}{p})^{\mathbf{D}[v]})
\label{eq:E_Av}
\end{align}

In the power-law graph, $\mathbf{D}[v]$ can be treated as a \textit{Zipf} random variable, therefore Eq.(\ref{eq:E_Av}) can be further written as:
\begin{align}
    \mathbb{E}[|A(v)|] = p(1 - \mathbb{E}[(\frac{(p-1)}{p})^{\mathbf{D}[v]}])
\end{align}

Then:
\begin{align}
    \mathbb{E}[\frac{1}{|V|}\sum_{v\in V}|A(v)|] = \frac{p}{|V|}\sum_{v\in V} (1 - \mathbb{E}[(\frac{(p-1)}{p})^{\mathbf{D}[v]}])
\label{eq:seven}
\end{align}

In the power-law graph $G$, the probability of a vertex degree being $d$ is $\mathbf{P}(d) = d^{-\alpha}/\mathbf{h}_{|V|}(\alpha)$, where $\mathbf{h}_{|V|}(\alpha)=\sum_{d=1}^{|V|-1}d^{-\alpha}$ is the normalizing constant of the power-law \textit{Zipf} distribution. Then,
\begin{align}
    \mathbb{E}[(\frac{(p-1)}{p})^{\mathbf{D}[v]}] = \frac{1}{\mathbf{h}_{|V|}(\alpha)}\sum_{d=1}^{|V|-1}(1-\frac{1}{p})^d d^{-\alpha}
\label{eq:Ep}
\end{align}
By plugging Eq.(\ref{eq:Ep}) into Eq.(\ref{eq:seven}), we have:
\begin{align}
    \mathbb{E}[\frac{1}{|V|}\sum_{v\in V}|A(v)|] = p - \frac{p}{\mathbf{h}_{|V|}(\alpha)}\sum_{d=1}^{|V|-1}(\frac{p-1}{p})^d d^{-\alpha}
    \label{eq:expected_rep}
\end{align}

We can improve the randomly weighted vertex cut with greedy strategies which assign next edge onto the cluster that minimizes the conditional expected replication factor. But before we discuss greedy based algorithms for the weighted balanced vertex cut, we first prove that the objective function in Eq. (\ref{eq:optimization}) is submodular and a greedy algorithm can provide bounded optimality.

\subsubsection{Submodularity of the Objective Function}
\begin{theorem}\label{thm:k-balance}
The optimization problem introduced in Section~\ref{sec:wb-vertexcut} is NP-hard.
\end{theorem}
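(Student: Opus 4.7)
The plan is to prove Theorem~\ref{thm:k-balance} by a polynomial-time reduction from a known NP-hard problem. The cleanest route is to observe that the \emph{unweighted} balanced $p$-way vertex cut problem, whose hardness is established in the distributed graph computing literature (see \cite{gonzalez2012powergraph}), is a special case of our formulation. Specifically, given any instance $G=(V,E)$ of the unweighted balanced $p$-way vertex cut, construct the weighted instance $G' = (V,E,W)$ in linear time by setting $w_e = 1$ for every $e\in E$. Then $w_{avg}=1$, so the constraint in Eq.(\ref{eq:constraint}) collapses to the standard edge-count balance $\max_m |\{e\in E : M(e)=m\}| < \lambda |E|/p$, and the objective in Eq.(\ref{eq:optimization}) is identical in both problems. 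Feasible assignments and their objective values are thus preserved exactly, so any polynomial-time algorithm for the weight balanced $p$-way vertex cut would solve the unweighted balanced $p$-way vertex cut, contradicting its NP-hardness.

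To make the argument fully self-contained, I would also sketch a second reduction that pinpoints hardness in the weight dimension itself, namely from the classical Partition problem to the $p=2$ case. Given positive integers $a_1,\dots,a_n$ with total sum $S$, construct a graph on two endpoints $u,v$ with $n$ parallel edges $e_1,\dots,e_n$, where $w_{e_i}=a_i$. With $\lambda$ chosen just above $1$, the balance constraint $\max_m \sum_{e:M(e)=m} w_e < \lambda S/2$ is satisfiable if and only if the $a_i$ admit a partition into two sets of equal sum $S/2$, while the vertex-replication objective is trivially $1$ for any assignment using both parts. Hence deciding feasibility of Eq.(\ref{eq:constraint}) alone is already NP-hard.

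The two steps I would execute in order are: first, formalize the reduction from the unweighted problem by verifying that (a) the construction is polynomial, (b) feasibility is preserved in both directions, and (c) optimal values coincide; second, supply the Partition reduction with an explicit analysis of $w_{avg}$ and a suitable choice of $\lambda$ to show that even the feasibility question is intractable. Combined, these establish NP-hardness both as an inherited property from the unweighted case and as an intrinsic consequence of the weight-balance constraint.

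The main obstacle I anticipate is the technical bookkeeping in the Partition-based reduction: one must pick $\lambda$ carefully (strictly greater than $1$ yet small enough to rule out any unequal split), handle the fact that $S$ may not be even by standard padding with zero-weight dummy edges, and argue that the vertex-replication minimization does not accidentally make the problem easier on this pathological instance. Once these technicalities are resolved, the reductions themselves are short, and the theorem follows directly.
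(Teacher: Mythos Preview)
Your proposal is correct and, like the paper, hinges on the observation that setting every edge weight to $1$ collapses the weight-balanced formulation to a known NP-hard unweighted problem. The difference lies in the source problem: the paper's one-sentence proof invokes $k$-balanced graph partitioning in the sense of \cite{andreev2006balanced}---an \emph{edge-cut} problem where one partitions the vertex set into $k$ equal parts while minimizing the capacity of cut edges---whereas you reduce from the unweighted balanced $p$-way \emph{vertex cut} of \cite{gonzalez2012powergraph}. Your choice is the more direct one, since the unit-weight instance of Eqs.~(\ref{eq:optimization})--(\ref{eq:constraint}) is literally the PowerGraph problem; the paper's reduction from an edge-cut formulation leaves unexplained how an optimal edge-cut partition of vertices corresponds to an optimal vertex-cut assignment of edges, a gap your argument sidesteps entirely. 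Your second reduction from \textsc{Partition} is a genuine addition the paper does not attempt, isolating hardness in the weight-balance constraint alone. The only technical caveat is that your construction uses $n$ parallel edges between two endpoints, which falls outside the simple-graph model implicit here; replacing the multigraph by a perfect matching on $2n$ vertices (so that $|A(v)|=1$ automatically for every vertex and only the feasibility of Eq.~(\ref{eq:constraint}) is at stake) repairs this without altering the argument, and also makes your ``zero-weight dummy edge'' padding unnecessary.
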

\begin{proof}
K-balanced graph partitioning \cite{andreev2006balanced} divides a graph into $k$ equal sized clusters while minimizing the capacity of edges cut, which is NP-hard. It reduces to the optimization problem by having a unit weight for each edge in a graph to be cut.
\end{proof}

\begin{theorem}\label{thm:submodular}
The objective function in the Eq. (\ref{eq:optimization}) is submodular.
\end{theorem}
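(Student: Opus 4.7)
The plan is to recognize the replication objective as a sum of \emph{coverage functions}, one per vertex, which are the archetypal submodular set functions. The natural ground set is not the assignment $M$ itself but the set $X = E \times \{1,\dots,p\}$ of all possible (edge, cluster) placements, so that the partial edge placements arising during a greedy algorithm correspond to subsets $S \subseteq X$. This is the setting in which the submodularity claim is meaningful and in which it is later used to analyze the greedy algorithm.

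With this ground set fixed, I would define $A_S(v) = \{\, m \in \{1,\dots,p\} : \exists (e,m) \in S \text{ with } v \in e \,\}$ and
\begin{equation*}
f(S) \;=\; \sum_{v \in V} |A_S(v)|,
\end{equation*}
so that $f$ agrees (up to the $1/|V|$ normalization, which is irrelevant for submodularity) with the objective in Eq.~(\ref{eq:optimization}) whenever $S$ corresponds to a complete edge assignment. Then I would prove submodularity of each per-vertex term $f_v(S) = |A_S(v)|$ separately and sum, since non-negative linear combinations of submodular functions are submodular.

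For a fixed vertex $v$, $f_v$ is a coverage function in disguise: each ground-set element $(e,m)$ with $v \in e$ ``covers'' cluster $m$ for $v$, while elements $(e,m)$ with $v \notin e$ contribute nothing. Concretely, for any $S \subseteq T \subseteq X$ and any $(e^*,m^*) \notin T$, the marginal gain $f_v(S \cup \{(e^*,m^*)\}) - f_v(S)$ equals $1$ exactly when $v \in e^*$ and $m^* \notin A_S(v)$, and $0$ otherwise. Since $A_S(v) \subseteq A_T(v)$, the condition $m^* \notin A_T(v)$ implies $m^* \notin A_S(v)$, so the marginal gain at $T$ is at most the marginal gain at $S$. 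This is precisely the diminishing-returns inequality $f_v(S \cup \{x\}) - f_v(S) \ge f_v(T \cup \{x\}) - f_v(T)$, establishing submodularity of $f_v$, and hence of $f$.

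The one subtlety I want to flag is that the \emph{feasible} placements form only those $S \subseteq X$ where each edge appears in at most one cluster, which is a (transversal) matroid constraint rather than a cardinality constraint. Submodularity itself is a property of the set function on the unconstrained ground set $X$, so it is unaffected by this, and the proof above is clean; the constraint only matters later, when one invokes submodular maximization/minimization under matroid constraints together with the balance constraint in Eq.~(\ref{eq:constraint}) to justify the approximation guarantee of the greedy algorithm. I do not expect any serious obstacle in the proof itself: the main work is setting up the ground set correctly so that ``submodularity'' is a well-defined assertion about $f$, after which the coverage-function argument is essentially a two-line check.
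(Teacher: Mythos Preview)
Your proof is correct, but it proceeds quite differently from the paper's. The paper works directly with the vertex-spanning sets: it takes as ``ground set'' the lattice of assignments $A : V \to 2^{\{1,\dots,p\}}$, defines $(X\cup Y)(v)=X(v)\cup Y(v)$ and $(X\cap Y)(v)=X(v)\cap Y(v)$ coordinatewise, and then observes that $|X(v)\cup Y(v)| + |X(v)\cap Y(v)| = |X(v)| + |Y(v)|$ by inclusion--exclusion, summing over $v$ to get $f(X)+f(Y)=f(X\cup Y)+f(X\cap Y)$. That is, the paper actually shows $f$ is \emph{modular} on this lattice, hence submodular. You instead choose the ground set $E\times\{1,\dots,p\}$ of edge placements and identify each $f_v$ as a coverage function, giving (strict) submodularity via diminishing returns.

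The two formalizations are not equivalent: on your ground set $f$ is genuinely submodular but not modular, because two distinct edge placements $(e_1,m)$ and $(e_2,m)$ with $v\in e_1\cap e_2$ both cover the same cluster $m$ for $v$. Your choice is the more natural one for the downstream use, since the greedy algorithm adds edge placements one at a time and its analysis needs exactly the diminishing-returns inequality on partial placements; the paper's modularity argument, while shorter, lives on a lattice that does not directly index the states visited by the greedy procedure. Your remark about the per-edge uniqueness constraint being a (partition/transversal) matroid rather than a cardinality constraint is also a useful observation that the paper glosses over when it later invokes the $(1-1/e)$ bound.
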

\begin{proof}
Given an LLVM IR graph $G=(V,E,W)$, define two assignment sets $X,Y=\{A(v_1), A(v_2), \\...,A(v_{|V|})\}\subseteq \Omega$ where for any node $v$, $A(v) \subseteq \{1,\cdots,p\}$ and $\Omega$ is the solution space of the problem. We define $f(X)$ as the objective function defined in Eq. (\ref{eq:optimization}).\\
\indent If $X\cap Y = \emptyset$, then
\begin{align}
    \begin{split}
      f(X\cap Y)+f(X\cup Y) & = \frac{1}{|V|}\sum_{v\in V}|X(v)+Y(v)|+\cancelto{0}{f(\emptyset)}\\
      & = \frac{1}{|V|}\sum_{v\in V}|X(v)|+\frac{1}{|V|}\sum_{v\in V}|Y(v)|\\
      & = f(X)+f(Y)
    \end{split}
\label{eq:sumodular1}
\end{align}
\indent If $X\cap Y = S_c$ where $S_c$ is a set of the common elements, then
\begin{align}
    \begin{split}
      f(X\cap Y)+f(X\cup Y) & = \frac{1}{|V|}\sum_{v\in V}|X(v)+Y(v)-S_c(v)|+\frac{1}{|V|}\sum_{v\in V}|S_c(v)|\\
      & = \frac{1}{|V|}\sum_{v\in V}\{|X(v)|+|Y(v)|-|S_c(v)|+|S_c(v)|\}\\
      & = \frac{1}{|V|}\sum_{v\in V}|X(v)|+\frac{1}{|V|}\sum_{v\in V}|Y(v)|\\
      & = f(X)+f(Y)
    \end{split}
\label{eq:submodular2}
\end{align}
\indent Therefore, by combining Eq. (\ref{eq:sumodular1}) and Eq. (\ref{eq:submodular2}), we can infer that the objective function is submodular because for any two sets $X, Y \subseteq \Omega$, $f(X) + f(Y) = f(X \cap Y) + f(X\cup Y)$. 
\end{proof}
\begin{theorem}\label{thm:monotonic}
The objective function in the Eq. (\ref{eq:optimization}) is monotonic.
\end{theorem}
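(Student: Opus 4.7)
The plan is to unpack what monotonicity means for the objective $f(X) = \frac{1}{|V|}\sum_{v\in V}|X(v)|$ in the same set-theoretic framework that was set up in the proof of Theorem~\ref{thm:submodular}, and then show that $f$ is monotonically non-decreasing on the assignment lattice. Recall that each element $X$ of the solution space $\Omega$ is a tuple $X = \{X(v_1),\dots,X(v_{|V|})\}$ of subsets of $\{1,\dots,p\}$, one per vertex, where $X(v)$ records the clusters across which the edges adjacent to $v$ have been placed. I would first declare the partial order on $\Omega$: we write $X \subseteq Y$ iff $X(v) \subseteq Y(v)$ for every $v \in V$. This is the natural coordinate-wise inclusion that makes the preceding submodularity proof meaningful.

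Next, I would take two arbitrary assignments $X, Y \in \Omega$ with $X \subseteq Y$ and argue vertex-by-vertex. For each $v \in V$, since $X(v) \subseteq Y(v)$ as subsets of $\{1,\dots,p\}$, we immediately have $|X(v)| \le |Y(v)|$. Summing this inequality over all $v$ and multiplying by the positive constant $1/|V|$ preserves the direction, giving
\begin{equation}
f(X) = \frac{1}{|V|}\sum_{v\in V}|X(v)| \;\le\; \frac{1}{|V|}\sum_{v\in V}|Y(v)| = f(Y).
\end{equation}
This is the statement that $f$ is monotonically non-decreasing on $(\Omega,\subseteq)$, which is the notion of monotonicity relevant for the optimality guarantees of greedy algorithms on submodular functions.

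There is essentially no hard step here; the entire argument is two lines once the correct partial order is stated. The only thing worth being careful about is consistency with the preceding theorem: in the submodularity proof, operations like $X \cap Y$, $X \cup Y$, and $X(v) + Y(v) - S_c(v)$ were interpreted coordinate-wise on the per-vertex subsets, so I will use precisely the same convention to make $X \subseteq Y$ mean $X(v) \subseteq Y(v)$ for all $v$. With that convention fixed, the monotonicity proof is a one-paragraph direct verification, and combined with Theorem~\ref{thm:submodular} it justifies applying the standard $(1-1/e)$ greedy guarantee for monotone submodular maximization (or, equivalently, the matching bound for the minimization reformulation used later to analyze the proposed greedy vertex-cut algorithms).
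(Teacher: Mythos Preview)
Your proof is correct. Both you and the paper rely on the same trivial observation---that $f$ is a normalized sum of set cardinalities, so enlarging the per-vertex sets cannot decrease it---but the presentations differ slightly. The paper uses the \emph{marginal-gain} formulation: it fixes an assignment $A$, adjoins a single new assignment $v_k$ disjoint from $A$, computes $f(A\cup v_k)=f(A)+f(v_k)$, and concludes $f(A\cup v_k)-f(A)\ge 0$. You instead use the equivalent \emph{containment} formulation, fixing $X\subseteq Y$ coordinate-wise and reading off $|X(v)|\le |Y(v)|$ directly. Your route is a touch cleaner because it avoids the implicit disjointness assumption needed for the paper's additivity step, and it makes the partial order on $\Omega$ explicit, which better matches how monotone submodularity is usually stated for the $(1-1/e)$ greedy bound invoked in Theorem~\ref{thm:montonic}.
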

\begin{proof}
\indent Given an LLVM IR graph $G=(V,E,W)$, we define an assignment set $A=\{v_1, v_2, ...,v_{|V|}\}\subseteq \Omega$ and an arbitrary assignment $v_k \nsubseteq A$.
\begin{align}
    \begin{split}
      f(A\cup v_k) & = \frac{1}{|V|}\sum_{v\in V}|A'(v)|\\
      & = \frac{1}{|V|}\sum_{v\in V}\{|A(v)|+|v_k(v)|\}\\
      & = \frac{1}{|V|}\sum_{v\in V}|A(v)|+\frac{1}{|V|}\sum_{v\in V}|v_k(v)|\\
      & = f(A)+f(v_k)
    \end{split}
\end{align}
\indent Therefore, $f(A\cup v_k)-f(A)\geqslant 0$.
\end{proof}

\begin{theorem}\label{thm:montonic}
Given a monotonic submodular function $f$, the greedy maximization algorithm\footnote{We can easily convert minimization to maximization in this problem by adding a negative sign to the function.} returns \begin{equation}
    f(A_{greedy}) \geqslant (1-\frac{1}{e})\max_{|A|<K}f(A)
\end{equation}
where $K$ is the maximum number of possible assignments. Therefore, even though the optimization problem is NP-hard, algorithm 1 is designed to find an assignment which provides a $(1-1/e)$ approximation of the optimal value of $A$.
\end{theorem}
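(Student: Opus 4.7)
The plan is to invoke the classical Nemhauser--Wolsey--Fisher guarantee for maximizing a monotone submodular set function subject to a cardinality constraint, which is exactly the setting asserted here once we interpret an ``assignment'' as an element being added to the current solution and $K$ as the cardinality bound. Concretely, let $A^* = \argmax_{|A| \leq K} f(A)$ and let $A_0 \subset A_1 \subset \cdots \subset A_K = A_{greedy}$ denote the nested sequence produced by the greedy rule, where $A_{i+1} = A_i \cup \{v_{i+1}\}$ with $v_{i+1} \in \argmax_v f(A_i \cup \{v\}) - f(A_i)$. The target is the standard recursion
\begin{equation}
f(A^*) - f(A_{i+1}) \;\leq\; \bigl(1 - \tfrac{1}{K}\bigr)\bigl(f(A^*) - f(A_i)\bigr),
\end{equation}
which, iterated $K$ times from $f(A_0)=0$, yields $f(A^*) - f(A_K) \leq (1-\tfrac{1}{K})^K f(A^*) \leq e^{-1} f(A^*)$, and hence $f(A_{greedy}) \geq (1 - 1/e) f(A^*)$ as claimed.

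The key intermediate step I would carry out first is the telescoping inequality
\begin{equation}
f(A^*) \;\leq\; f(A_i) \;+\; \sum_{v \in A^* \setminus A_i}\bigl[f(A_i \cup \{v\}) - f(A_i)\bigr].
\end{equation}
This follows by enumerating the elements of $A^* \setminus A_i$ in any order, writing $f(A_i \cup A^*) - f(A_i)$ as a telescoping sum of marginal gains, bounding each marginal gain by $f(A_i \cup \{v\}) - f(A_i)$ via the submodularity shown in Theorem~\ref{thm:submodular}, and finally using the monotonicity from Theorem~\ref{thm:monotonic} to replace $f(A_i \cup A^*)$ by the larger quantity $f(A^*)$ on the left. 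Since $|A^* \setminus A_i| \leq K$ and the greedy choice maximizes the marginal gain, the right-hand sum is at most $K \bigl(f(A_{i+1}) - f(A_i)\bigr)$, which rearranges into the desired recursion.

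The only real obstacle is a mild bookkeeping subtlety: Theorem~\ref{thm:submodular} was established with equality, $f(X) + f(Y) = f(X \cap Y) + f(X \cup Y)$, which is modularity and in particular implies submodularity in the inequality form $f(X) + f(Y) \geq f(X \cap Y) + f(X \cup Y)$ that the Nemhauser--Wolsey--Fisher argument actually uses. I would therefore explicitly state that the modular identity implies the submodular inequality of marginal gains, namely $f(A \cup \{v\}) - f(A) \geq f(B \cup \{v\}) - f(B)$ whenever $A \subseteq B$, and then proceed with the telescoping bound above. After that, the $(1 - 1/e)$ guarantee falls out of the recursion by the standard geometric estimate $(1 - 1/K)^K \leq 1/e$, completing the proof.
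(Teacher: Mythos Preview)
Your proposal is correct: it is the standard Nemhauser--Wolsey--Fisher argument for monotone submodular maximization under a cardinality constraint. The paper does not actually supply its own proof of this theorem; it simply states that the result follows from~\cite{krause2014submodular}, and your write-up is precisely the argument one finds in that reference. One small wording slip: when you invoke monotonicity you write ``replace $f(A_i \cup A^*)$ by the larger quantity $f(A^*)$,'' but monotonicity gives $f(A^*) \leq f(A_i \cup A^*)$, so $f(A^*)$ is the \emph{smaller} quantity---the inequality direction you need is still correct, only the adjective is reversed.
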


\noindent
The proof of Theorem~\ref{thm:montonic} follows from~\cite{krause2014submodular}.
 
\subsection{Greedy Algorithms for Weight Balanced Vertex Cut}
\label{sec:algorithms}
\begin{algorithm}
\caption{Weight Balanced Libra: A Greedy Algorithm for Vertex Cut Graph Partitioning}
 \begin{algorithmic}[1]
    \STATE \textbf{Input}: Edge set $E$; edge weight matrix $W$; vertex set $V$; a set of clusters $C=\{1,2,\cdots,p\}$; $\lambda$.
    \STATE \textbf{Output}: The assignment $M(e)\in \{1,2,\cdots,p\}$ of each edge $e$.
    \STATE Count the degree $d_i$ for each vertex $v_i$, $\forall i \in \{1,2,\cdots,|E|\}$
    \STATE Compute the cluster weight sum bound $b=\lambda \frac{\sum_{e\in E}w_e}{p}$
    \FOR{each $e = (v_i, v_j)\in E$}
        \IF{$A(v_i)=\emptyset$ and $A(v_j)=\emptyset$}
            \STATE $m = leastloaded(C)$
        \ELSIF{$A(v_i)\neq \emptyset \land A(v_j)=\emptyset$}
            \STATE $m = leastloaded(A(v_i))$
            \IF{$load(m)\geq b$}
                \STATE $m = leastloaded(C)$
            \ENDIF
        \ELSIF{$A(v_i)= \emptyset \land A(v_j)\neq \emptyset$}
            \STATE $m = leastloaded(A(v_j))$
            \IF{$load(m)\geq b$}
                \STATE $m = leastloaded(C)$
            \ENDIF
        \ELSIF{$A(v_i)\cap A(v_j) \neq \emptyset$}
            \STATE $m = leastloaded(A(v_i)\cap A(v_j))$
            \IF{$load(m)\geq b$}
                \STATE $m = leastloaded(A(v_i)\cup A(v_j))$
                \IF{$load(m)\geq b$}
                    \STATE $m = leastloaded(C)$
                \ENDIF
            \ENDIF
        \ELSE
            \STATE $s = arg \min_l{\{d_l|l\in \{i,j\}})$
            \STATE $t = \{v_i,v_j\}-\{s\}$
            \STATE $m = leastloaded(A(s))$
            \IF{$load(m)\geq b$}
                \STATE $m = leastloaded(A(t))$
                \IF{$load(m)\geq b$}
                    \STATE $m = leastloaded(C)$
                \ENDIF
            \ENDIF
        \ENDIF
    \STATE $M(e)\leftarrow m $; $A(v_i)\leftarrow m $; $A(v_j)\leftarrow m$
    \ENDFOR
\end{algorithmic}
\label{algorithm:wblibra}
\end{algorithm}

To solve the vertex cut optimization problem defined in Eq. (\ref{eq:optimization}) via a greedy approach, we consider the task of placing the $(i+1)$-th edge after having placed the previous $i$ edges. We define the objective based on the conditional expectation, as shown below.

\begin{align}
    \argmin_k\mathbb{E} \Big[\sum_{v\in V} |A(v)| \;\Big|\; A_i,A(e_{i+1}) = k \Big]
\end{align}

\noindent where $A_i$ is the assignment for the previous $i$ edges.  

In the following paragraphs, we propose four different greedy solutions for the edge placement of the weight balanced vertex-cut. We call them Weighted PowerGraph, Weight Balanced PowerGraph, Weighted Libra, and Weight Balanced Libra.
\paragraph{\textbf{Weighted PowerGraph}} The \textit{PowerGraph} approach is proposed in \cite{gonzalez2012powergraph} for unweighted vertex cuts, which assigns edges to clusters while balancing the number of edges assigned to each cluster. Inspired by the greedy edge placement in \textit{PowerGraph} and based on the objective of the weighted vertex cut defined in Eq. (\ref{eq:optimization}), we define the edge placement rules for our Weighted PowerGraph greedy algorithm as follows.
For an edge $(u,v)$,
\begin{itemize}
    \item Case 1: If $A(u)\cap A(v)\neq \emptyset$, then assign the edge to the least loaded cluster in $A(u)\cap A(v)$, where the workload of each cluster refers to the total weights of all the edges assigned to the cluster. 
    \item Case 2: If $A(u)\cap A(v) = \emptyset$ and $A(u) \neq \emptyset, A(v)\neq \emptyset$, then assign edge $(u,v)$ to the least loaded cluster in $A(l)$, where $l$ is the vertex from ${u,v}$ that has more unassigned edges. 
    \item Case 3: If one of $A(u)$ and $A(v)$ is not empty, then assign the edge $(u,v)$ to the least loaded cluster in the non-empty set (i.e., $A(u) \cup A(v)$).
    \item Case 4: If $A(u) = \emptyset$ and $A(v) = \emptyset$, then assign $(u,v)$ to the least loaded one of the $p$ clusters. 
\end{itemize}

\paragraph{\textbf{Weighted Libra}}
Due to the power-law degree distribution in LLVM graphs, the edge weights associated with high-degree vertices tend to accumulate in a single cluster if these vertices are not cut and spanned over multiple clusters, which can lead to workload imbalance. Moreover, cutting the higher-degree vertices tends to save more communication cost between clusters compared to cutting lower-degree vertices. The Libra unweighted vertex cut approach in \cite{xie2014distributed} first proposes a degree-based hashing strategy to address such an issue for cutting power-law graphs, where the higher-degree vertex associated with an edge will be cut with priority if a vertex has to be cut in order to place this edge. Inspired by the degree-based strategy in Libra, we exploit the degree property of vertices during edge placement. Based on Weighted PowerGraph and this degree-based rule, we propose a greedy algorithm called Weighted Libra, which has the following edge placement rules: For an edge $(u,v)$,
\begin{itemize}
    \item Case 1: If $A(u)\cap A(v)\neq \emptyset$, then assign the edge to the least loaded cluster in $A(u)\cap A(v)$, where the workload of each cluster refers to the total weights of all the edges assigned to the cluster.  
    \item Case 2: If $A(u)\cap A(v) = \emptyset$ and $A(u) \neq \emptyset, A(v)\neq \emptyset$, then assign edge $(u,v)$ to the least loaded cluster in $A(l)$, where $l$ is either $u$ or $v$ whichever has the lower degree. 
    \item Case 3: If one of $A(u)$ and $A(v)$ is not empty, then assign the edge $(u,v)$ to the least loaded machine in the non-empty set (i.e., $A(u) \cup A(v)$).
    \item Case 4: If $A(u) = \emptyset$ and $A(v) = \emptyset$, then assign $(u,v)$ to the least loaded one of the $p$ clusters. 
\end{itemize}

According to the edge placement rules of Weighted PowerGraph and Weighted Libra, the load balancing among clusters is considered by assigning edges to the least loaded cluster under each case. However, this strategy cannot guarantee the overall balance of the workload (i.e., total edge weights) among different clusters or permit control of the emphasis to put on the balance constraint. To address this issue and further improve load balancing, we incorporate an explicit constraint on the balance of edge weights among clusters into the greedy edge placement rules of the Weighted PowerGraph and Weighted Libra, and have two new greedy algorithms: \textit{\textbf{Weight Balanced PowerGraph}} and \textit{\textbf{Weight Balanced Libra}}. Specifically, we incorporate the constraint on the edge weight balance, which is formulated in Eq. (\ref{eq:constraint}), into the greedy edge placement rules of the Weighted PowerGraph and Weighted Libra. For cases 1-3 in both algorithms, before placing an edge to the target cluster, we first check if the current sum of edge weights in a target cluster is within the bound given by $\lambda \frac{w_{avg}|E|}{p}$, where $\lambda \geq 1$ is a constant. If it is, then we place the edge into this cluster.  Otherwise, we search another cluster from the remaining set that satisfies this condition as the target cluster for the placement. By setting different values to $\lambda$, we can allow different amounts of emphasis on the workload balance. To illustrate the overall workflow of these greedy algorithms, we summarize the Weighted Balanced Libra greedy algorithm in Algorithm~\ref{algorithm:wblibra} as an example. 

\subsection{Discussions}
\label{sec:discussions}
\paragraph{\textbf{Time Complexity}}
According to the workflow of the Weighted Balanced Libra algorithm as shown in Algorithm~\ref{algorithm:wblibra}, given a graph $G=(V,E,W)$, for each edge $e$ in $E$, the algorithm retrieves the cluster with the least load (i.e., total edge weights) either from the entire cluster set $C$ or from a subset of $C$. For the former case (line 7, 11, 16, 23, or 33 in Algorithm~\ref{algorithm:wblibra}), it takes $O(|C|)$ time, and for the latter case, it takes $O(|C_1|)$ time ($|C_1|\leq |C|$) if the balance constraint is satisfied (line 9, 14, 19, 29), and otherwise it takes $O(|C_1| + |C_2|), |C_2|\leq |C|$ (line 21, 31), or $O(|C_1| + |C|)$ (line 11, 16), or $O(|C_1| + |C_2| + |C|)$ (line 23, 33). Note that line 27 in the algorithm takes $O(1)$. So in the worst case, the algorithm takes $O(3|C|)+O(1)=O(|C|)$ for placing one edge. Therefore, the overall time complexity of Weighted Balanced Libra algorithm is $O(|E|\times|C|)$. Based on the edge placement rules introduced in Section~\ref{sec:algorithms}, this time complexity applies to the three other algorithms as well, although the Weighted Libra and Weighted PowerGraph may take relatively less time in practice compared to Weight Balanced PowerGraph and Weight Balanced Libra, since they do not have the weight balanced constraint. Therefore, they have the time complexity of $O(|C|)$ or $O(|C_1|)$ discussed above for placing one edge. 

\paragraph{\textbf{Edge Weight Imbalance}}
Besides the replication factor discussed in Section~\ref{sec:replication} as a goal of the optimization model for the weight balanced vertex cut problem, the edge weight balance among different clusters is another key metric for evaluating the performance, which determines the load balance. As we discussed above in Section~\ref{sec:algorithms}, the degree-based hashing strategy introduced by Libra tends to have more balanced cut results as the higher-degree vertices have a higher priority to be cut than the lower-degree vertices. This statement has also been proved theoretically by \cite{xie2014distributed}, which shows that Libra can achieve a lower edge imbalance than PowerGraph. By incorporating this degree-based vertex cut rule, our proposed Weighted Libra algorithm is expected to achieve a better load balancing (i.e., a lower edge weight imbalance) than the Weighted PowerGraph algorithm. Furthermore, the proposed Weight Balanced PowerGraph and Weight Balanced Libra allows for a further improvement in the load balancing via incorporating a constraint for the edge weight imbalance by the given bound $\lambda \frac{w_{avg}|E|}{p} (\lambda \geq 1)$. If we set $\lambda = 1$, these two algorithms can guarantee near-ideal balanced vertex cut results, with an edge weight imbalance $1+\epsilon$, where $\epsilon$ is a small non-negative constant. 


\section{Architecture and Mapping}
In this section, we discuss the non-uniform memory access (NUMA) architecture used in the evaluation and propose a memory-centric mapping scheme for mapping the clusters obtained via the vertex-cuts onto processors in the NUMA architecture.
\subsection{NUMA Architecture}
\begin{figure}
\centering
\includegraphics[width=0.80\textwidth, height=0.27\textwidth]{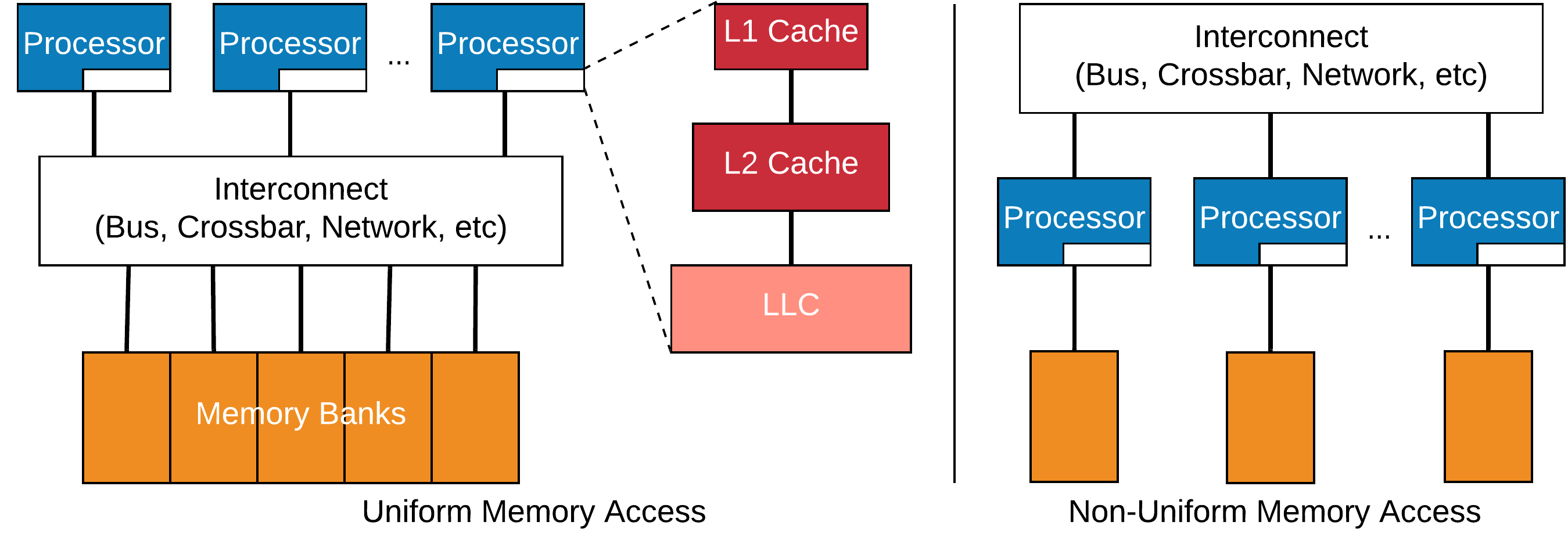} 
\vskip -5pt
\caption{UMA and NUMA Architectures. UMA is a shared memory architecture with uniform memory access whereas NUMA enables fast memory access for a processor to its local physical memory and slow memory access to the rest of memory banks.}
\vskip -10pt
\label{fig-uma}
\end{figure}
\indent Uniform memory access (UMA) is a shared memory architecture for processors running in parallel as shown in Figure \ref{fig-uma}. It develops a unified vision of the shared physical memory, meaning that access time to a particular memory address is independent regardless of which processor requests data from different memory banks. On the contrary, NUMA allows memory access time dependent on the relative processor location. That is, a processor has fast memory access time to its local memory and slow access to the rest of memory. This non-uniformity enables potential fewer memory accesses with fast access time. Limiting the number of memory accesses and fast memory accesses provides the key to high performance computing. Therefore, we decide to apply the NUMA architecture and propose a memory-centric run-time mapping to utilize the benefits of NUMA and reduce the amount of data communicated among processors.
\subsection{Memory-centric Run-time Mapping}
\indent At run-time, processes/clusters generated in Section 4 from each application are mapped onto processors in a NUMA architecture in order to be executed. Depending on the mapping (e.g., A process has to fetch data from the farthest memory bank.), data communication is a performance bottleneck. The goal is to improve the amortized time when slow accesses occur only once in a while and fast local accesses happen frequently. \\
\indent Therefore, without fully observing the structure of clusters with corresponding physical memories, performance would degrade due to these reasons: (1) Waiting for cache update: The multi-core platforms require the cache coherence protocol to have consistent data over private caches. A process later mapped to a different core may increase the time spent for the cache coherence protocol to fetch a cache line from the previous core. (2) Block memory operations between IOs and memory: Block memory operations in IOs constitute a large overhead in the program execution because a large amount of data are referenced and transferred between caches and main memory banks. (3) Core utilization: In an extreme case, processes may be mapped only onto one core to exploit cache temporal and spatial locality. However, the rest cores remain idle for a long time. Therefore, core utilization is another factor for efficient parallelism in multi-core systems.\\
\indent In order to improve performance, the run-time mapping should exploit and optimize the parallelism in clusters while considering data communication between clusters and resource utilization. Figure \ref{fig-mapping} shows three important factors to help design a memory-centric run-time mapping.\\
\begin{figure}
\centering
\includegraphics[width=0.73\textwidth, height=0.28\textwidth]{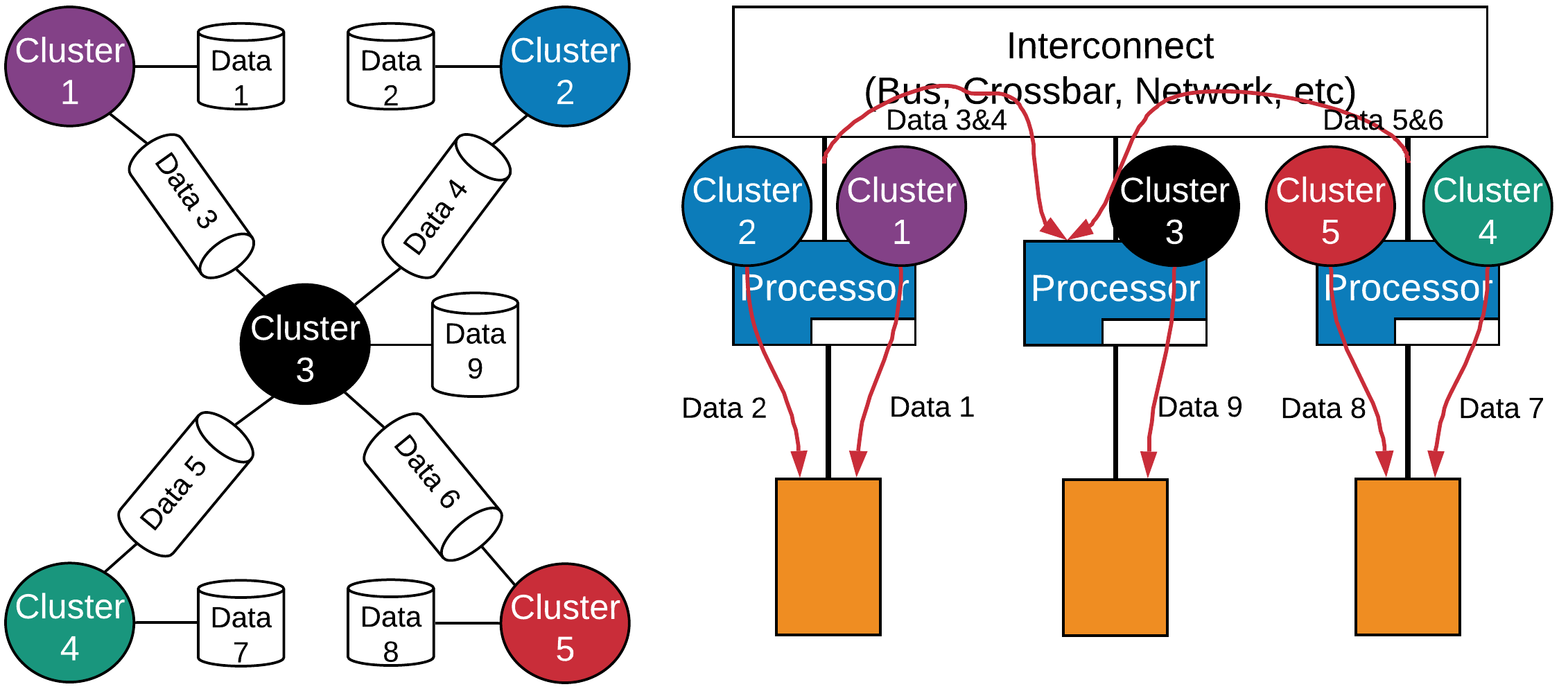} 
\caption{Memory-Centric Run-Time Mapping, which considers memory hierarchy and data communication.}
\vskip -10pt
\label{fig-mapping}
\end{figure}
\indent 1. Clusters which reference the same data structures can be mapped onto one core to prevent the time spent on cache coherence protocols and reduce the number of block operations. \\
\indent 2. Clusters which communicate with each other can be mapped to adjacent processors to improve the amortized time by reducing the number of times on fetching data from the farthest processor on a NUMA architecture.\\
\indent 3. Clusters which are independent of each other can be mapped to different regions of a multi-core platform (architecture decomposition) to reduce the number of sharing paths of messages. \\
\indent Therefore, the memory-centric run-time mapping algorithm as shown in Algorithm 2 takes as inputs the clusters, their interactions, and data communication and schedules a mapping from clusters to processors with the objective of improving application performance. The uttermost important criterion for a run-time mapping is the small time complexity. Therefore, we propose a greedy algorithm to achieve high performance with the time complexity of $O(P)$ where $P$ is the number of schedulable clusters. In the algorithm, we first check whether a cluster is ready to schedule, and keep track of all clusters with which this cluster communicates. Next, based on the factor 3, the architecture decomposition is performed to make sure that independent clusters are mapped onto faraway processors to distribute workloads and traffic evenly on hardware communication substrate in a multi-core platform. Then, we calculate the execution time for two clusters with the shared memory to be mapped onto the same processor and different ones, respectively. A mapping of the current cluster is decided based on factors 1 and 3. Mapping clusters with the shared memory onto the same processor could reduce the large overhead for block operations, but at the same time the parallelism may suffer if too many clusters are mapped to a single processor. Therefore, we set an upper threshold of the number of clusters to be mapped to a processor. In the evaluation, the threshold is 4.

\begin{algorithm}
\small
    \caption{Memory-Centric Run-Time Mapping}
  \begin{algorithmic}[1]
    \STATE \textbf{INPUT}: Clusters and data communication
    \STATE \textbf{OUTPUT}: A mapping from clusters to the architecture
    \STATE Runqueue RQ = $\emptyset$
    \FOR{$cluster$ in $clist$}
        \IF{$cluster$->$status$ = $SCHEDULABLE$}
            \IF{C = ClusterFromMem($cluster$->$m$) != $\emptyset$}
                \STATE $cluster$->$p$ = C // Factor 1
            \ENDIF
            \IF{C = ClusterComm($cluster$) != $\emptyset$}
                \STATE $cluster$->$ipc$ = C // Factor 2
            \ELSE
                \STATE $cluster$->$ipc$ = $\emptyset$ // Factor 3
            \ENDIF
            \STATE RQ.$push$($cluster$)
        \ENDIF
    \ENDFOR
    \STATE Regions = ArchitectureDecomposition()
    \STATE LastCluster = NULL
    \REPEAT
        \STATE $cluster$ = RQ.$pop$()
        \IF{ClusterFromMem($cluster$->$m$)->$core$ == LastCluster->$core$}
            \STATE // Decide on mapping clusters onto the same processor
            \IF{LastCluster->$core$->$num$ <= threshold}
                \STATE $cluster$->$core$ = $cluster$->$p$ // Factor 1
                \STATE LastCluster->$core$->$num$++
            \ELSE
                \STATE $cluster$->$core$ = DiffRegion($cluster$->$core$)
            \ENDIF
        \ELSIF{$cluster$->$ipc$ != $\emptyset$}
            \STATE $cluster$->$core$ = Nearby($cluster$->$ipc$->$core$)
        \ELSE
            \STATE $cluster$->$core$ = DiffRegion($cluster$->$ipc$->$core$)
        \ENDIF
        \STATE LastCluster = $cluster$
    \UNTIL{RQ is empty}
  \end{algorithmic}
\end{algorithm}
\section{Evaluations}
\indent In this section, we discuss the simulator configurations and present experimental results to investigate the soundness of the proposed methodology. 
\subsection{Simulation Configurations}
\begin{table}
\centering
\small
\caption{System Configuration}
\vskip -15pt
\scalebox{1}{
\begin{tabular}{ l|l|l }
\multicolumn{3}{}{} \\ 
\toprule
\multirow{6}{*}{\textbf{CPU}} & Cores & Out-of-order cores, 16 MSHRs \\ \cline{2-3}
 & Clock frequency & 2.4 GHz\\ \cline{2-3}
 & L1 private cache & 64KB, 4-way associative \\
 && 32-byte blocks \\ \cline{2-3}
 & L2 shared cache & 256KB, distributed \\ \cline{2-3}
 & Memory & 4 GB, 8 GB/s bandwidth \\ \hline
\multirow{3}{*}{\textbf{Network}} & Topology & Mesh \\ \cline{2-3}
 & Routing algorithm & XY routing \\ \cline{2-3}
 & Flow control & Virtual channel flit-based  \\ 
 \bottomrule
\end{tabular}
}
\label{table-simu}
\end{table}
\begin{table}[t]
\centering
\small
\caption{Summary and Description of Benchmarks 
}
\vspace{-7pt}
\scalebox{0.98}{
\begin{tabular}{l|l|l|l} \toprule
\textbf{Benchmark} & \textbf{Description} &\textbf{Input Size} & \textbf{Source} \\ \midrule
\textit{Dijkstra}&Find the shortest path&50 nodes&MiBench\\ \hline
\textit{FFT}&Compute fast Fourier transform&A vector of size 1024&OmpSCR\\ \hline
\textit{K-means}&Partition data into k clusters&128 2D tuples&Self-collected\\ \hline
\textit{Mandel}&Calculate Mandelbrot set&4092 points&OmpSCR\\ \hline
\textit{MD}&Simulate molecular dynamics&512 particles&OmpSCR\\ \hline
\textit{NN}&Neural networks&Three hidden fully connected layers&Self-collected\\ \hline
\textit{Neuron}&A list of neurons with the ReLU function&64 Neurons&Self-collected\\ \hline
\textit{CNN}&Convolutional neural networks&Conv-Pool-Conv-Pool-FC&Self-collected\\ \hline
\textit{Strassen8}&Strassen's matrix multiplication&Matrices of size 64&Self-collected\\ \hline
\textit{Strassen16}&Strassen's matrix multiplication&Matrices of size 256&Self-collected\\ \bottomrule
\end{tabular}
}
\vskip -5pt
\label{table-benchmark}
\end{table}
\indent We use gem5 \cite{binkert2011gem5} to simulate a varying number of out-of-order cores with the NUMA architecture. Table \ref{table-simu} shows detailed simulation parameters. Table \ref{table-benchmark} lists the considered applications from various benchmarks with different graph characteristics, including applications from the OpenMP Source Code Repository (OmpSCR)\cite{dorta2005openmp}, Mibench \cite{guthaus2001mibench}, and implementations of some other benchmark algorithms. We generate LLVM IR graphs from these applications following the procedure introduced in Section~\ref{sec:graph}. For baseline comparisons, we consider four baseline methods for graph partitioning: (1) the work in \cite{xiao2017load} abbreviated as \textit{CompNet}; (2) \textit{METIS} \cite{lasalle2015improving}, which is an edge-cut method that implements various multilevel algorithms by iteratively coarsening a graph, computing a partition, and projecting the partition back to the original graph; (3) the unweighted vertex-cut method \textit{PowerGraph} (\textit{PG}) \cite{gonzalez2012powergraph}; and (4) the unweighted vertex-cut method \textit{Libra} \cite{xie2014distributed}. We compare these baselines with the proposed four greedy vertex-cut algorithms: \textit{Weighted PowerGraph (W-PG)}, \textit{Weight Balanced PowerGraph (WB-PG)}, \textit{Weighted Libra (W-Libra)}, \textit{Weight Balanced Libra (WB-Libra)}. 

\subsection{Experimental Results}
In this section, we evaluate the proposed methods and baselines on the LLVM graphs transformed from the benchmarks listed in Table \ref{table-benchmark} for the proposed graph partitioning and compare their performance in the graph partition quality (in terms of replication factor and edge weight imbalance). Next, we execute clusters generated from each method to measure the overall execution time and data communication. We also analyze the sensitivity of the parameter $\lambda$ involved in the constraint of load balancing to the execution time.  
\subsubsection{Replication Factor.}
In Section~\ref{sec:replication}, we have derived the theoretical expected replication factor of the weighted vertex cut with random edge placement, which is in fact a theoretical upper-bound for the replication factor of the proposed greedy algorithms. We now empirically evaluate the performance of the proposed four greedy vertex cut algorithms in replication factor, and compare the results with the theoretical upper-bound we compute by Eq. (\ref{eq:expected_rep}). Fig.~\ref{fig:results_rep} shows the results on four graphs. As we can see, the four greedy algorithms achieve comparable performance in the replication factor. All of their replication factors are within the theoretical upper-bound with a considerable gap, which indicates the superior advantage of the greedy vertex-cut algorithms over the random vertex cut strategy. 
\subsubsection{Edge Weight Imbalance.}
As discussed in Section~\ref{sec:discussions}, edge weight imbalance is a key metric for evaluating the performance of vertex-cuts in load balancing among clusters. The edge weight imbalance is defined by $({\max_m\sum_{e\in E, M(e)=m} w_e})/{(\frac{w_{avg}|E|}{p}})$, which measures how much the most-loaded cluster deviates from the expected average load between clusters. A good load-balancing vertex-cut method should achieve an edge weight imbalance close to 1, which indicates the absolute balance. We evaluate the edge weight imbalance of all the six vertex cut methods, where we set $\lambda = 1$ in the sum of weights in a cluster (line 4 of Algorithm~\ref{algorithm:wblibra}) for WB-PG and WB-Libra, in order to obtain their optimal balance of edge weights for comparisons with the other methods. Table~\ref{tab:imbalance} shows the results from edge weight imbalance of the six methods on all ten graphs. We observe from the table that, WB-Libra achieves the best results in most of the graphs, except for Dijkstra, Mandel, and Md, where WB-PG performs the best. Both the two unweighted vertex cut methods, i.e. PowerGraph and Libra, achieve worse results compared to the four weighted vertex cut methods. This is mainly due to the fact that, the unweighted vertex cut was designed to balance the number of edges among clusters for unweighted graphs and therefore they can not properly handle the load balancing for weighted graphs. By comparing between WB-PG and W-PG, and between WB-Libra and W-Libra, we can see that the edge weight balance constraint we incorporate into the weighted balanced algorithms is effective for further improving the edge weight balance among clusters and is able to push the balance to the near-ideal situation.

\begin{figure}[h]
\centering
\begin{subfigure}[t]{0.42\textwidth}{
\centering
\includegraphics[width=1\textwidth,center]{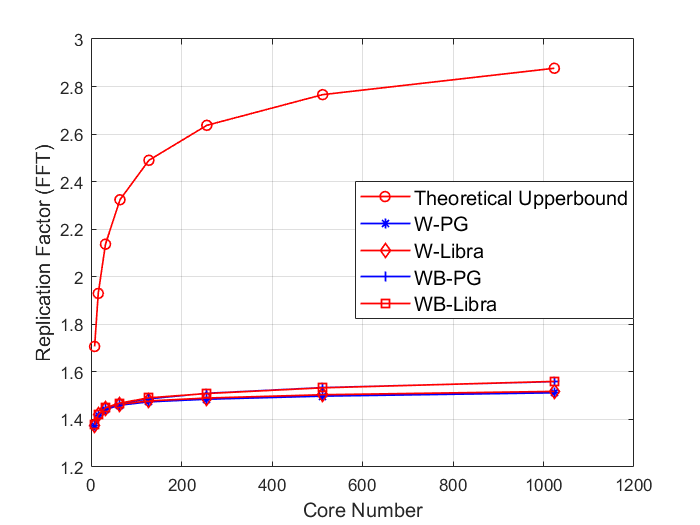}
}
\vspace{-9pt}
\caption{FFT}
\label{fig:fft_rep}
\end{subfigure}
\hspace{-8pt}
\begin{subfigure}[t]{0.42\textwidth}{
\centering

\includegraphics[width=1\textwidth,center]{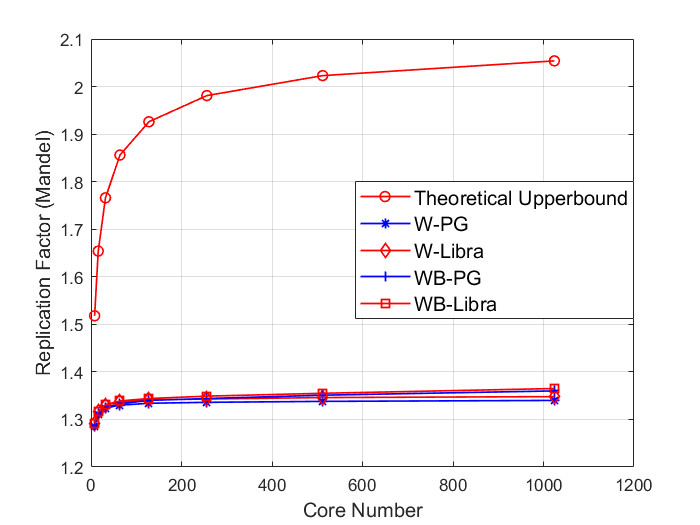}}

\caption{Mandel}
\label{fig:mandel_rep}
\end{subfigure}

\begin{subfigure}[t]{0.42\textwidth}{
\centering
\includegraphics[width=1\textwidth,center]{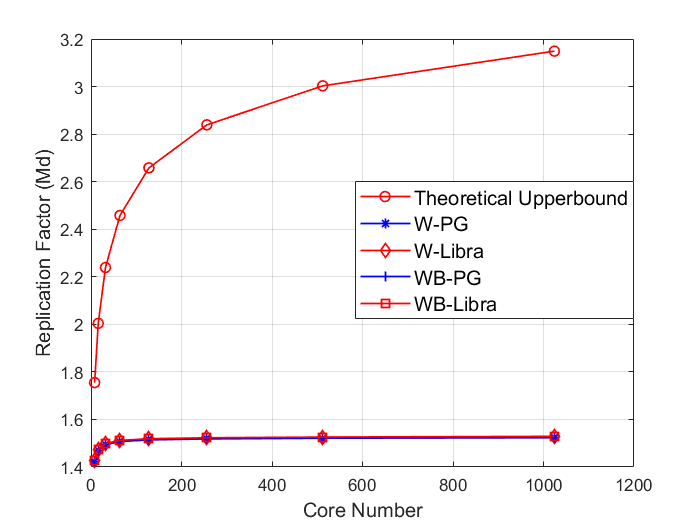}}
\caption{Md}
\label{fig:md_rep}
\end{subfigure}\hspace{-10pt}
\begin{subfigure}[t]{0.42\textwidth}{
\includegraphics[width=1\textwidth,center]{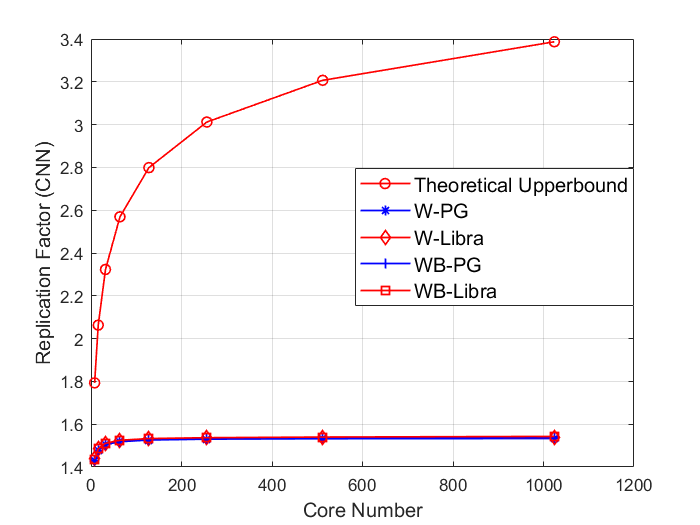}}
\caption{CNN}
\label{fig:cnn_rep}
\end{subfigure}
\vskip -5pt
\caption{Replication Factor of the Proposed Four Greedy Algorithms with Comparison to the Computed Theoretical Upper-bound by Eq. (\ref{eq:expected_rep})}
\label{fig:results_rep}
\end{figure}

\begin{table}[h]
\small
\caption{Statistics of Graph Datasets}
\vspace{-10pt}
\begin{tabular}{l|l|l|l|l}
\toprule
\textbf{Graph Dataset}        & \textbf{No. Nodes}   & \textbf{No. Edges}  & \textbf{power-law} $\mathbf{\alpha}$ & \textbf{Avg. Path Length}\\ \midrule
Dijkstra      & 248,959   & 291,112   & 2.29 & 136.4 \\ \hline
FFT           & 109,295   & 143,183   & 2.21 & 194.56\\ \hline
K-means       & 98,592    & 119,112   & 2.24 & 479.4 \\ \hline
Mandel        & 235,051   & 260,042   & 2.43 & 42.67 \\ \hline
Md            & 1,799,353 & 2,361,213 & 2.17 & 524.61\\ \hline
NN            & 124,496   & 161,428   & 2.16 & 171.52\\ \hline
Neuron        & 57,883    & 73,431    & 2.20 & 179.25\\ \hline
CNN           & 573,694   & 758,712   & 2.13 & 824.37\\ \hline
Strassen8     & 36,831    & 46,756    & 2.21 & 21.22\\ \hline
Strassen16    & 197,827   & 254,392   & 2.20 & 123.94\\ 
\bottomrule
\end{tabular}
\end{table}


\begin{table}[h]
\small
\caption{Edge Imbalance of the Vertex Cut Methods on Graphs} 
\label{tab:imbalance}
\vspace{-10pt}
\begin{tabular}{l|l|l|l|l|l|l}
\toprule
\textbf{Datasets}   & \textbf{PG} & \textbf{W-PG} & \textbf{WB-PG}    & \textbf{Libra}   & \textbf{W-Libra} & \textbf{WB-Libra}         \\ \hline
Dijkstra   & 1.00227    & 1.00092      & \textbf{1.00007} & 1.02136 & 1.00106 & 1.00010          \\ \hline
FFT        & 1.00586    & 1.00831      & 1.00075          & 1.05030 & 1.00400 & \textbf{1.00057} \\ \hline
K-means     & 1.00177    & 1.00469      & 1.00042          & 1.04566 & 1.00180 & \textbf{1.00035} \\ \hline
Mandel     & 1.00730    & 1.00233      & \textbf{1.00008} & 1.00749 & 1.00171 & 1.00014          \\ \hline
Md         & 1.00015    & 1.00007      & \textbf{1.00003} & 1.00791 & 1.00008 & \textbf{1.00003} \\ \hline
NN  & 1.00187    & 1.00235      & 1.00028          & 1.03441 & 1.00106 & \textbf{1.00019} \\ \hline
Neuron     & 1.00260    & 1.00487      & 1.00081          & 1.05738 & 1.00236 & \textbf{1.00045} \\ \hline
CNN  & 1.00040    & 1.00035      & 1.00010          & 1.00956 & 1.00027 & \textbf{1.00008} \\ \hline
Strassen8  & 1.01074    & 1.01307      & 1.00177          & 1.05036 & 1.00787 & \textbf{1.00123} \\ \hline
Strassen16 & 1.00338    & 1.00352      & 1.00029          & 1.04206 & 1.00170 & \textbf{1.00028} \\ 
\bottomrule
\end{tabular}
\end{table}

\subsubsection{Execution Time}
\begin{figure}
    \centering 
\begin{subfigure}{0.5\textwidth}
  \includegraphics[width=\linewidth, height=0.55\linewidth]{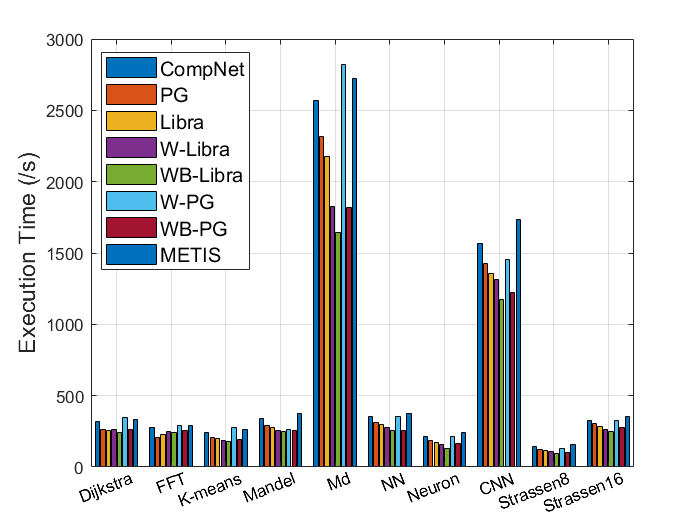}
  \vskip -10pt
  \caption{8 clusters}
  \label{fig:1}
\end{subfigure}\hfil 
\begin{subfigure}{0.5\textwidth}
  \includegraphics[width=\linewidth, height=0.55\linewidth]{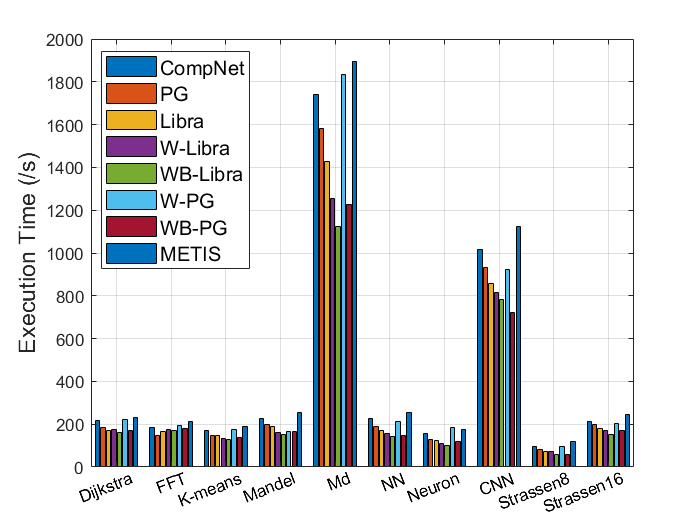}
 \vskip -10pt
  \caption{16 clusters}
  \label{fig:2}
\end{subfigure}\hfil 
\medskip
\begin{subfigure}{0.5\textwidth}
  \includegraphics[width=\linewidth, height=0.55\linewidth]{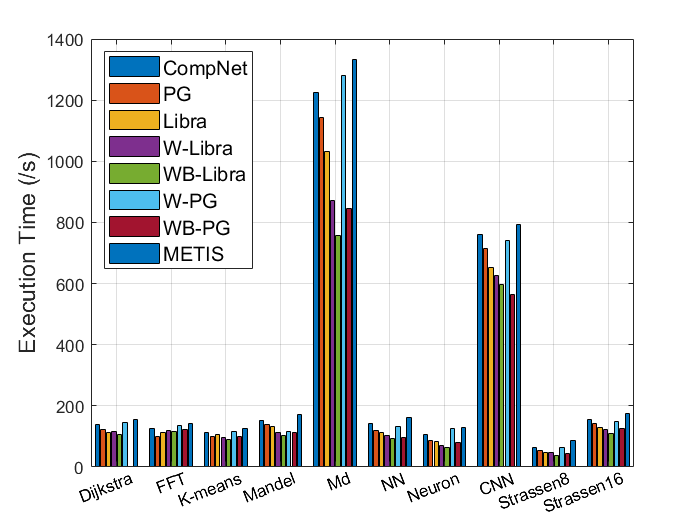}
   \vskip -10pt
  \caption{32 clusters}
  \label{fig:1}
\end{subfigure}\hfil 
\begin{subfigure}{0.5\textwidth}
  \includegraphics[width=\linewidth, height=0.55\linewidth]{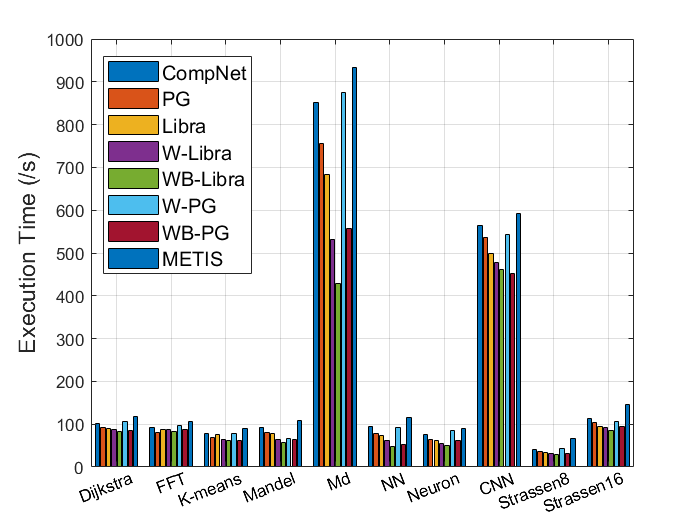}
   \vskip -10pt
  \caption{64 clusters}
  \label{fig:2}
\end{subfigure}\hfil 
\medskip
\begin{subfigure}{0.5\textwidth}
  \includegraphics[width=\linewidth, height=0.55\linewidth]{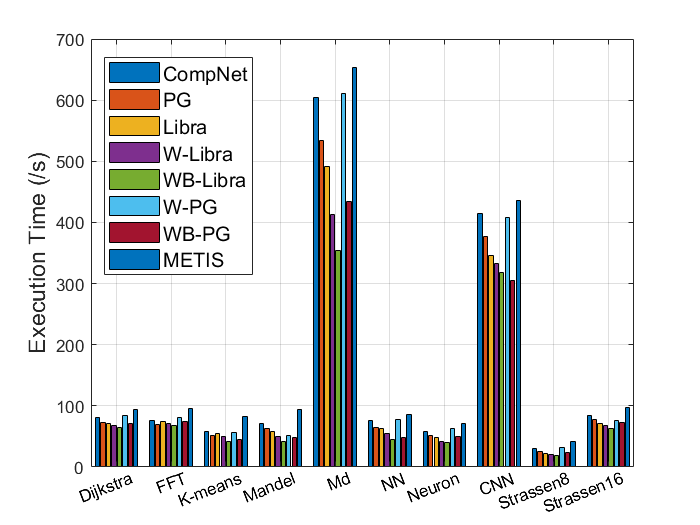}
   \vskip -10pt
  \caption{128 clusters}
  \label{fig:1}
\end{subfigure}\hfil 
\begin{subfigure}{0.5\textwidth}
  \includegraphics[width=\linewidth, height=0.55\linewidth]{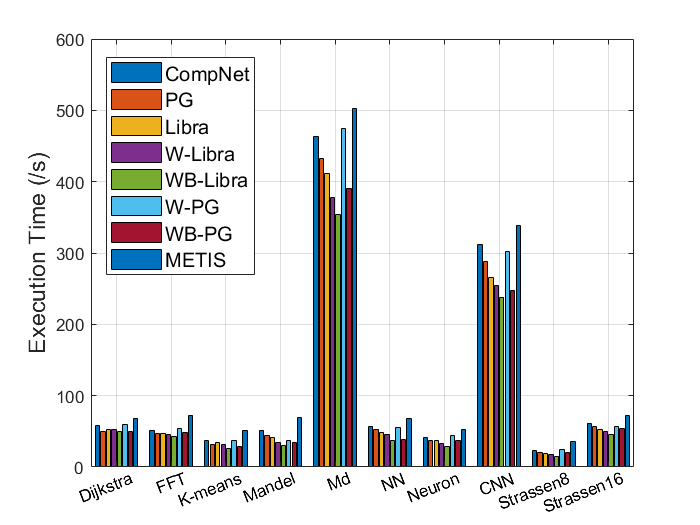}
   \vskip -10pt
  \caption{256 clusters}
  \label{fig:2}
\end{subfigure}\hfil 
\medskip
\begin{subfigure}{0.5\textwidth}
  \includegraphics[width=\linewidth, height=0.55\linewidth]{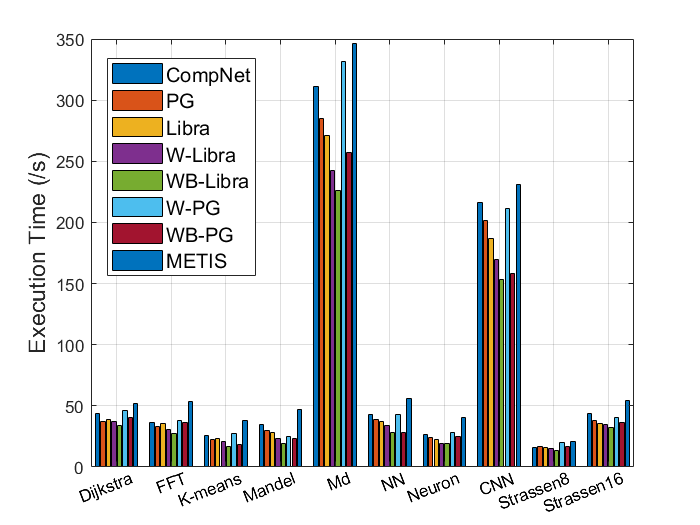}
   \vskip -10pt
  \caption{512 clusters}
  \label{fig:1}
\end{subfigure}\hfil 
\begin{subfigure}{0.5\textwidth}
  \includegraphics[width=\linewidth, height=0.55\linewidth]{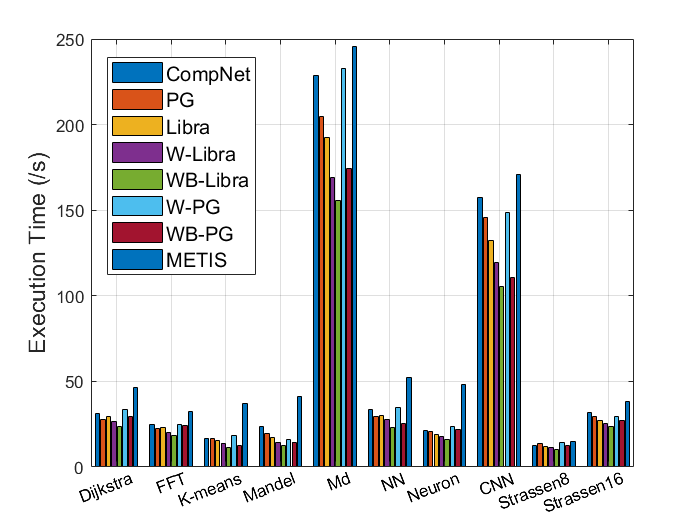}
   \vskip -10pt
  \caption{1024 clusters}
  \label{fig:2}
\end{subfigure}\hfil 
\vskip -10pt
\caption{Application Performance From Different Graph Partitioning Algorithms on a Multi-core System}
\label{fig-app}
\end{figure}
\begin{table}
\small
\caption{Overall Execution Time (/s) for 8 Clusters From Different Algorithms in a Multi-core Platform}
\vspace{-10pt}
\begin{tabular}{l|l|l|l|l|l|l|l|l}
 \toprule
\textbf{Datasets}   & \textbf{CompNet} & \textbf{METIS} & \textbf{PG}   & \textbf{W-PG} & \textbf{WB-PG} & \textbf{Libra} & \textbf{W-Libra} & \textbf{WB-Libra}  \\ \hline
Dijkstra   & 317.27 & 332.48 & 346.15     & 263.75          & 260.91        & 253.86  & 262.51  & \textbf{242.26}  \\ \hline
FFT        & 279.6         & 288.22     & \textbf{209.27} & 253.71        & 230.02  & 248.42  & 239.33  &  291.53        \\ \hline
K-means     & 244.87  & 261.38        & 279.53     & 206.25          & 195.53        & 201.54  & 188.25  & \textbf{178.58} \\ \hline
Mandel     & 341.35    & 373.28        & 265.15     & 289.74          & 257.12        & 277.31  & 256.49  & \textbf{245.82}\\ \hline
Md         & 2568.72  & 2723.71       & 2822.47    & 2313.9          & 1821.68       & 2178.41 & 1824.95 & \textbf{1642.18} \\ \hline
NN  & 351.23    & 376.93       & 354.32     & 311.74          & 256.41        & 297.59  & 278.24  & \textbf{253.89}\\ \hline
Neuron     & 214.75      & 242.68    & 213.95     & 187.23          & 163.63        & 174.54  & 157.69  & \textbf{131.4}  \\ \hline
CNN  & 1568.59   & 1736.37     & 1454.88    & 1425.63         & 1221.53       & 1358.61 & 1315.78 & \textbf{1175.8}  \\ \hline
Strassen8  & 142.41  & 155.39        & 131.24     & 121.37          & 104.99        & 112.62  & 111.23  & \textbf{96.24}  \\ \hline
Strassen16 & 326.75    & 351.26      & 323.5      & 304.21          & 274.63        & 285.44  & 264.58  & \textbf{248.25} \\ 
\bottomrule
\end{tabular}
\label{table:executiontime}
\end{table}
\begin{table}
\small
\caption{Overall Execution Time (/s) for 1024 Clusters From Different Algorithms in a Multi-core Platform}
\vspace{-10pt}
\begin{tabular}{l|l|l|l|l|l|l|l|l}
\toprule
\textbf{Datasets}   & \textbf{CompNet} & \textbf{METIS} & \textbf{PG}   & \textbf{W-PG} & \textbf{WB-PG} & \textbf{Libra} & \textbf{W-Libra} & \textbf{WB-Libra}  \\ \hline
Dijkstra   & 31.08 & 46.48 & 33.5     & 27.79          & 29.27        & 29.58  & 26.43  & \textbf{23.4}  \\ \hline
FFT        & 24.92         & 32.4     & 25.08 & 22.64        & 23.96  & 23.2  & 20.31  &  \textbf{18.59}        \\ \hline
K-means     & 16.77  & 37.23        & 18.26     & 16.37          & 12.54        & 15.3  & 13.92  & \textbf{11.26} \\ \hline
Mandel     & 23.48    & 41.37        & 15.8     & 19.81          & 14.52        & 17.47  & 14.43  & \textbf{12.6}\\ \hline
Md         & 228.43  & 245.61       & 233.02    & 204.53          & 174.23       & 192.23 & 169.18 & \textbf{155.71} \\ \hline
NN  & 33.44    & 52.36       & 34.92     & 29.35          & 25.48        & 29.84  & 27.9  & \textbf{23.22}\\ \hline
Neuron     & 21.3      & 48.32    & 23.73     & 20.93          & 21.62        & 19.19  & 17.97  & \textbf{16.11}  \\ \hline
CNN  & 157.5   & 170.92     & 148.48    & 145.87         & 110.67       & 132.22 & 119.43 & \textbf{105.29}  \\ \hline
Strassen8  & 12.51  & 15.03        & 14.57     & 13.55          & 12.39        & 12.17  & 11.11  & \textbf{10.31}  \\ \hline
Strassen16 & 31.81    & 38.14      & 29.23      & 29.62          & 27.36        & 27.22  & 25.25  & \textbf{23.42} \\ 
\bottomrule
\end{tabular}
\label{table:executiontime1024}
\end{table}
Fig.~\ref{fig-app} shows the execution time of each application for different graph partitioning algorithms with various cluster numbers. Specifically, tables \ref{table:executiontime} and \ref{table:executiontime1024} show the execution time for 8 and 1024 clusters, respectively. As we can see, the vertex-cut methods overall achieve a better performance the edge-cut baselines CompNet and METIS. This verifies our expectation that the vertex-cut based graph partitioning methods would work better than edge-cut methods for the power-law graphs. Among the six vertex-cut methods, the proposed four methods (i.e., W-PG, WB-PG, W-Libra and WB-Libra) outperform the two unweighted vertex-cut methods. This is reasonable since the unweighted vertex-cuts are not able to handle the load balancing for weighted graphs, as we discussed in Section~\ref{sec:discussions}, and the load imbalance among clusters will lead to a longer overall execution time for the applications, as the overall execution time depends on the time for executing the cluster with the largest workload. Among the four proposed methods, WB-Libra achieve the best performance in most cases consistently for all different numbers of clusters. This demonstrates the benefit of using the degree-based vertex hashing strategy and the load balancing constraint in the vertex-cuts. These results in execution time indicate that the proposed vertex-cut based graph partitioning framework is effective in load balancing and parallelism optimization for multi-core systems and it has superior performance than the state-of-the-art baselines. 

\subsubsection{Data Communication}
\begin{figure}
    \centering 
\begin{subfigure}{0.5\textwidth}
  \includegraphics[width=\linewidth, height=0.55\linewidth]{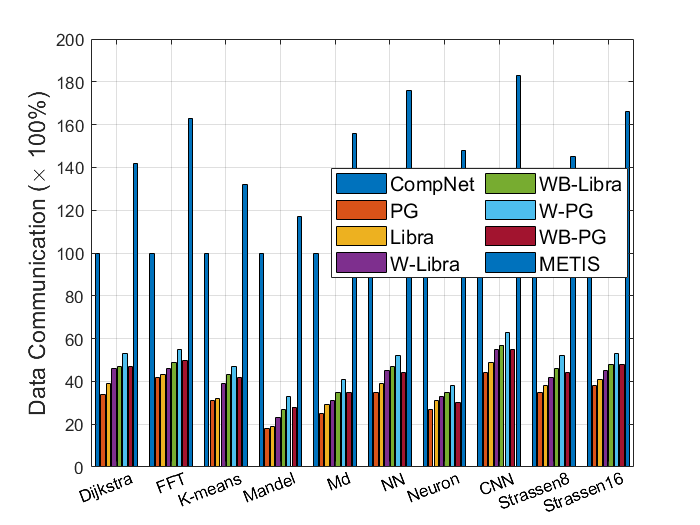}
  \vskip -10pt
  \caption{8 clusters}
  \label{fig:1}
\end{subfigure}\hfil 
\begin{subfigure}{0.5\textwidth}
  \includegraphics[width=\linewidth, height=0.55\linewidth]{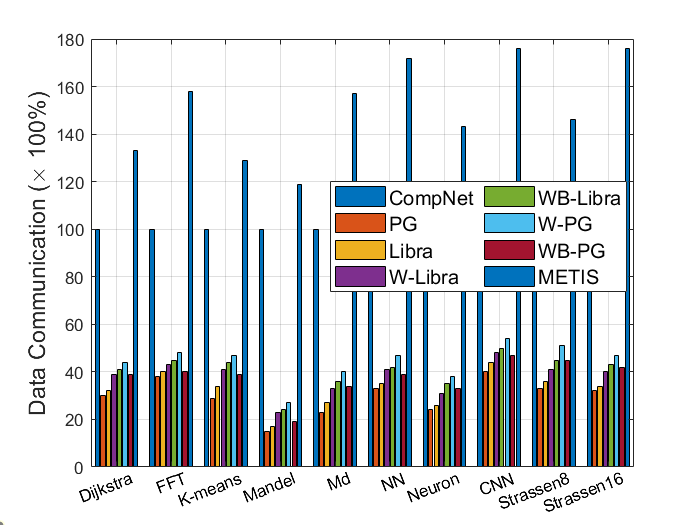}
 \vskip -10pt
  \caption{16 clusters}
  \label{fig:2}
\end{subfigure}\hfil 
\medskip
\begin{subfigure}{0.5\textwidth}
  \includegraphics[width=\linewidth, height=0.55\linewidth]{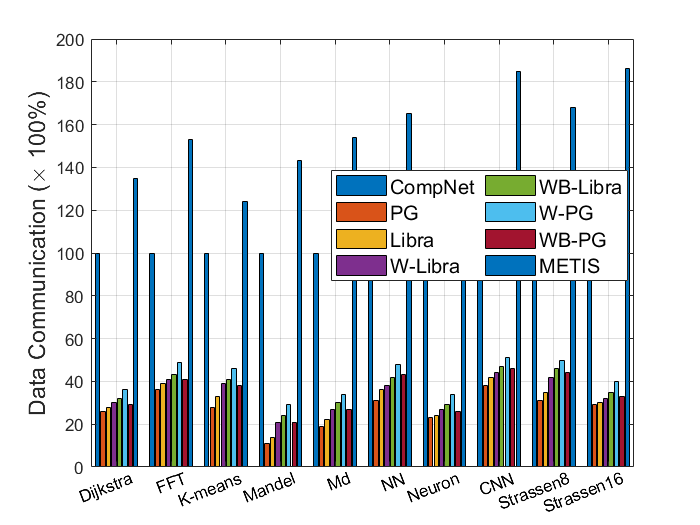}
   \vskip -10pt
  \caption{32 clusters}
  \label{fig:1}
\end{subfigure}\hfil 
\begin{subfigure}{0.5\textwidth}
  \includegraphics[width=\linewidth, height=0.55\linewidth]{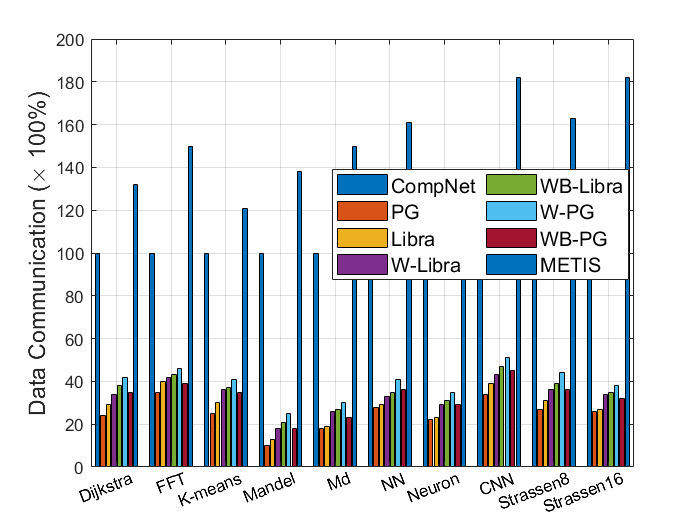}
   \vskip -10pt
  \caption{64 clusters}
  \label{fig:2}
\end{subfigure}\hfil 
\medskip
\begin{subfigure}{0.5\textwidth}
  \includegraphics[width=\linewidth, height=0.55\linewidth]{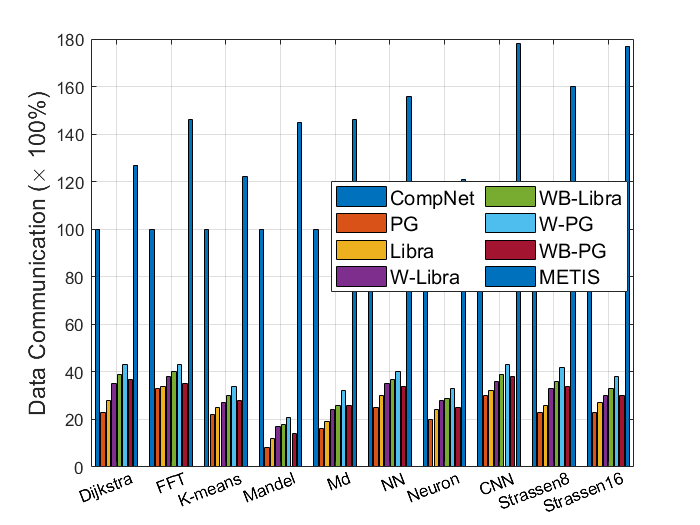}
   \vskip -10pt
  \caption{128 clusters}
  \label{fig:1}
\end{subfigure}\hfil 
\begin{subfigure}{0.5\textwidth}
  \includegraphics[width=\linewidth, height=0.55\linewidth]{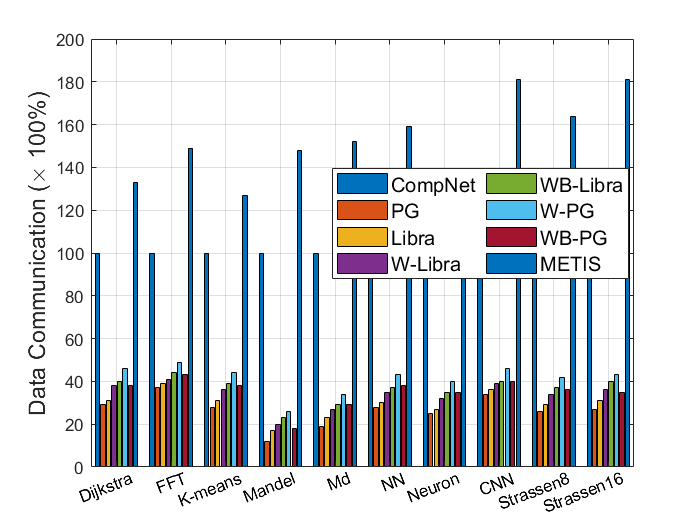}
   \vskip -10pt
  \caption{256 clusters}
  \label{fig:2}
\end{subfigure}\hfil 
\medskip
\begin{subfigure}{0.5\textwidth}
  \includegraphics[width=\linewidth, height=0.55\linewidth]{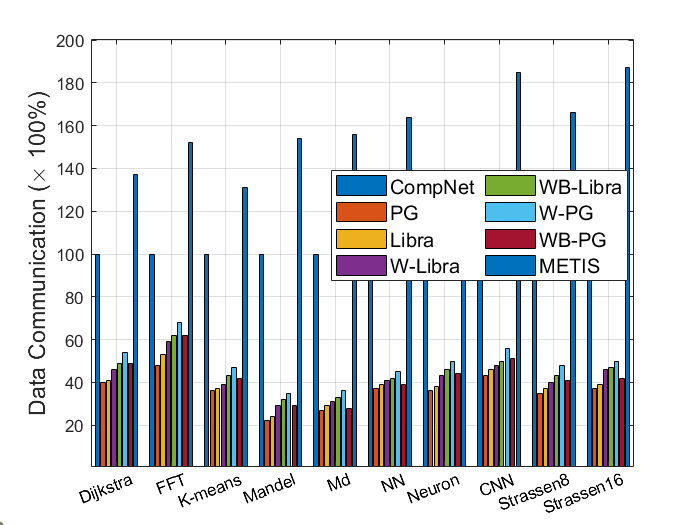}
   \vskip -10pt
  \caption{512 clusters}
  \label{fig:1}
\end{subfigure}\hfil 
\begin{subfigure}{0.5\textwidth}
  \includegraphics[width=\linewidth, height=0.55\linewidth]{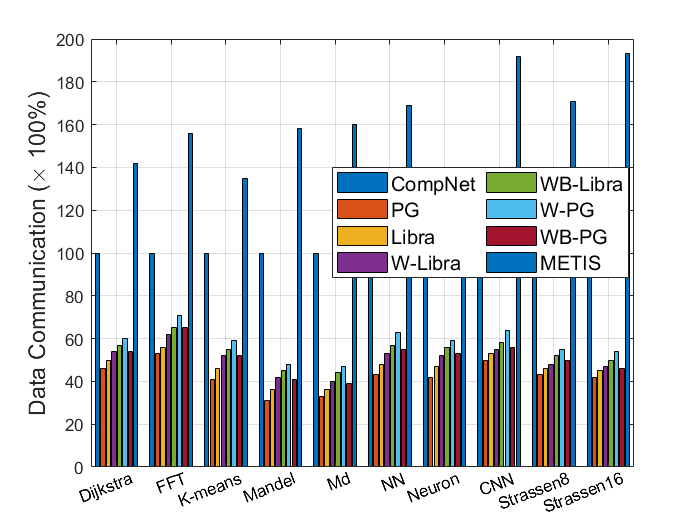}
   \vskip -10pt
  \caption{1024 clusters}
  \label{fig:2}
\end{subfigure}\hfil 
\vskip -5pt
\caption{Data Communication Cost From Different Graph Partitioning Algorithms in a Multi-core Platform}
\label{fig-datacomm}
\end{figure}
\begin{table}
\small
\caption{Data Communication for 8 Clusters From Different Graph Partitioning Algorithms}
\vspace{-10pt}
\begin{tabular}{l|l|l|l|l|l|l|l|l}
 \toprule
\textbf{Datasets}   & \textbf{CompNet} & \textbf{METIS} & \textbf{PG}   & \textbf{W-PG} & \textbf{WB-PG} & \textbf{Libra} & \textbf{W-Libra} & \textbf{WB-Libra}  \\ \hline
Dijkstra   & 100\%     &142\%     & 60\% & 46\% & 54\%  & 50\%  & 54\%    & 57\%    \\ \hline
FFT        & 100\%     & 156\%     & 71\% & 53\% & 65\%  & 56\%  & 62\%    & 65\%     \\ \hline
K-means     & 100\%     & 135\%       & 59\% & 41\% & 52\%  & 46\%  & 52\%    & 55\%   \\ \hline
Mandel     & 100\%    & 158\%      & 48\% & 31\% & 41\%  & 36\%  & 42\%    & 45\%     \\ \hline
Md         & 100\%    & 160\%      & 47\% & 33\% & 39\%  & 36\%  & 40\%    & 44\%     \\ \hline
NN  & 100\%    & 169\%      & 63\% & 43\% & 55\%  & 48\%  & 53\%    & 57\%    \\ \hline
Neuron     & 100\%  & 137\%    & 59\% & 42\% & 53\%  & 47\%  & 52\%    & 56\%     \\ \hline
CNN  & 100\%     & 192\%        & 64\% & 50\% & 56\%  & 53\%  & 55\%    & 58\%  \\ \hline
Strassen8  & 100\%   & 171\%        & 55\% & 43\% & 50\%  & 46\%  & 48\%    & 52\%    \\ \hline
Strassen16 & 100\%    & 193\%        & 54\% & 42\% & 46\%  & 45\%  & 47\%    & 50\%   \\ 
\bottomrule
\end{tabular}
\label{table-datacomm}
\end{table}
\begin{table}
\small
\caption{Data Communication for 1024 Clusters From Different Graph Partitioning Algorithms}
\vspace{-10pt}
\begin{tabular}{l|l|l|l|l|l|l|l|l}
 \toprule
\textbf{Datasets}   & \textbf{CompNet} & \textbf{METIS} & \textbf{PG}   & \textbf{W-PG} & \textbf{WB-PG} & \textbf{Libra} & \textbf{W-Libra} & \textbf{WB-Libra}  \\ \hline
Dijkstra   & 100\%     &142\%     & 53\% & 34\% & 47\%  & 39\%  & 46\%    & 47\%    \\ \hline
FFT        & 100\%     & 163\%     & 55\% & 42\% & 50\%  & 43\%  & 46\%    & 49\%     \\ \hline
K-means     & 100\%     & 132\%       & 47\% & 31\% & 42\%  & 32\%  & 39\%    & 43\%   \\ \hline
Mandel     & 100\%    & 117\%      & 33\% & 18\% & 28\%  & 19\%  & 23\%    & 27\%     \\ \hline
Md         & 100\%    & 156\%      & 41\% & 25\% & 35\%  & 29\%  & 31\%    & 35\%     \\ \hline
NN  & 100\%    & 176\%      & 52\% & 35\% & 44\%  & 39\%  & 45\%    & 47\%    \\ \hline
Neuron     & 100\%  & 148\%    & 38\% & 27\% & 30\%  & 31\%  & 33\%    & 35\%     \\ \hline
CNN  & 100\%     & 183\%        & 63\% & 44\% & 55\%  & 49\%  & 55\%    & 57\%  \\ \hline
Strassen8  & 100\%   & 145\%        & 52\% & 35\% & 44\%  & 38\%  & 42\%    & 46\%    \\ \hline
Strassen16 & 100\%    & 166\%        & 53\% & 38\% & 48\%  & 41\%  & 45\%    & 48\%   \\ 
\bottomrule
\end{tabular}
\label{table-datacomm1024}
\end{table}
Fig.~\ref{fig-datacomm} shows data communication of each application for different graph partitioning algorithms with various cluster numbers. Specifically, Tables \ref{table-datacomm} and \ref{table-datacomm1024} show the communication for 8 and 1024 clusters, respectively. As we can see, all the vertex cut methods have comparable good performance in reducing data communication and outperform the edge-cut methods (METIS and CompNet) by a huge amount. For example, according to Table~\ref{table-datacomm}, the WB-PG reduces the data communication for 8-cores by an average of $48.9\%$ compared to CompNet over the 10 graphs, and WB-Libra reduces it by an average of $46.1\%$. METIS fails to reduce data communication compared to others. However, it is interesting to note that METIS causes less than 120\% for the Mandelbrot application whereas the data communication is more than 130\% for the rest of applications. It is because Mandelbrot is a embarrassingly parallel application where little effort is required to separate it into a number of parallel clusters. However, traditional edge cut algorithms such as CompNet and METIS still lead to a seriously huge amount of data communication between clusters, while vertex cut methods is able to maintain a much lower communication cost. This is mainly because that the data communication in edge-cut partitions comes from all the inter-cluster edges, while there is no such communication cost in vertex-cut partitions since there is no inter-cluster edges and the only communication for the vertex-cut partitions is the communication between the replicas of the cut vertices across different clusters. Another thing to notice is that the general trend of data communication from 8 clusters up to 1024 clusters is it first goes down and up again at 128 clusters. The trend of data communication going down is mainly due to the efficient parallelism while minimizing data communication. However, as the number of clusters increases beyond 128 clusters, synchronization starts to take over the impact of data communication because processes are synchronized to allow only one process enter the critical section to modify the shared data structures in main memory. Nevertheless, the least data communication overhead in these cases is still from the proposed vertex-cut algorithms.

\subsubsection{Sensitivity Analysis: Execution Time v.s. Parameter $\lambda$}
\begin{figure}
    \centering 
\begin{subfigure}{0.42\textwidth}
  \includegraphics[width=\linewidth, height=0.55\linewidth]{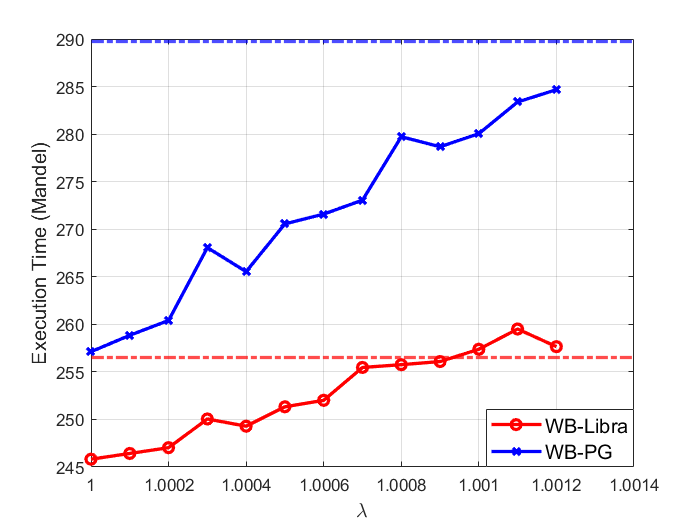}
  \vskip -5pt
  \caption{Mandel}
  \label{fig:1}
\end{subfigure}\hfil 
\begin{subfigure}{0.42\textwidth}
  \includegraphics[width=\linewidth, height=0.55\linewidth]{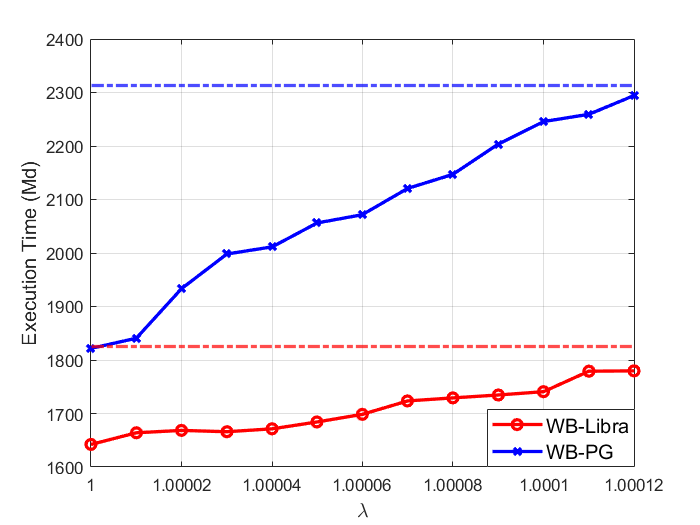}
 \vskip -5pt
  \caption{Md}
  \label{fig:2}
\end{subfigure}\hfil 
\medskip
\begin{subfigure}{0.33\textwidth}
  \includegraphics[width=1.\linewidth, height=0.8\linewidth]{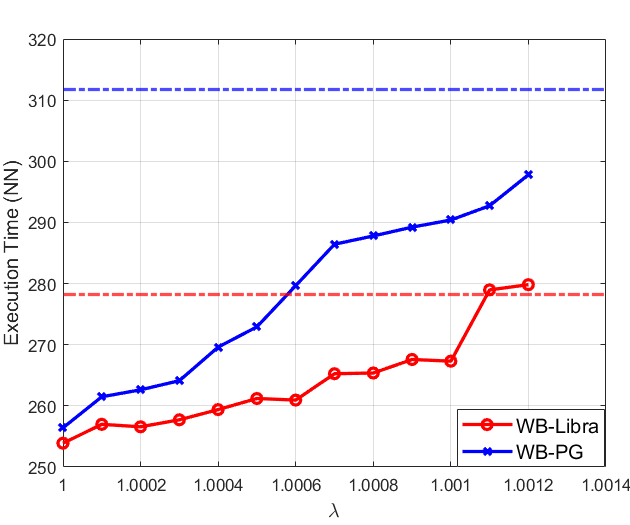}
   \vskip -5pt
  \caption{NN}
  \label{fig:1}
\end{subfigure}\hfil 
\begin{subfigure}{0.33\textwidth}
  \includegraphics[width=1.\linewidth, height=0.8\linewidth]{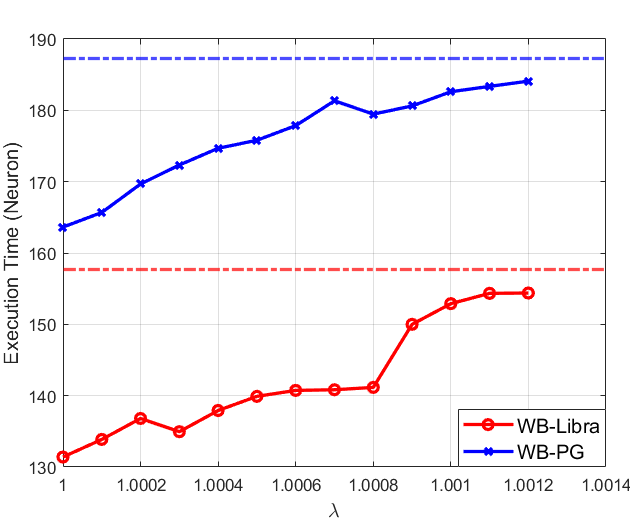}
   \vskip -5pt
  \caption{Neuron}
  \label{fig:2}
\end{subfigure}\hfil 
\begin{subfigure}{0.33\textwidth}
  \includegraphics[width=1.0\linewidth, height=0.8\linewidth]{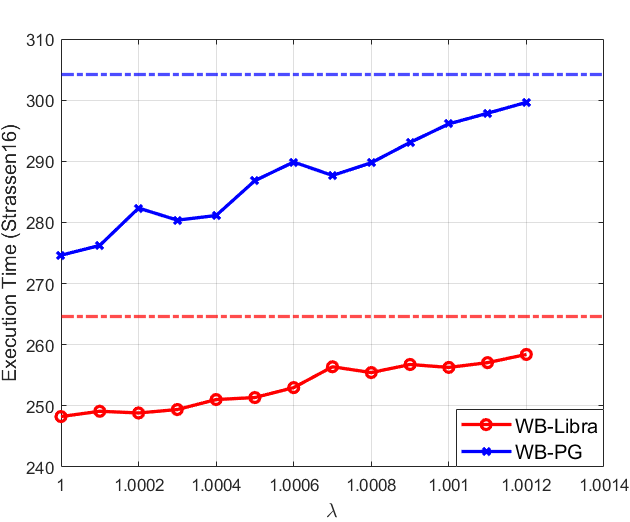}
   \vskip -5pt
  \caption{Strassen16}
  \label{fig:1}
\end{subfigure}\hfil 
\vskip -10pt
\caption{Execution Time With Different $\lambda$ Values for WB-Libra and WB-PG Algorithms. Dotted lines indicate the execution time for W-Libra and W-PG to which WB-Libra and WB-PG reduce, respectively, when $\lambda$ becomes large enough.}
\label{fig-par}
\end{figure}

Fig.~\ref{fig-par} shows the execution time with different $\lambda$ values in Eq. (\ref{eq:constraint}) on five applications, i.e., Mandel, Md, NN, Neuron, and Strassen16. $\lambda$ controls the balance of the clusters. In the WB-PG and WB-Libra algorithms, we explicitly use $\lambda$ to set a balance bound (line \#4 in Algorithm 1), and $\lambda = 1$ indicates the ideal balanced case. When $\lambda$ is large enough, WB-Libra and WB-PG reduce to W-Libra and W-PG, respectively. To analyze the sensitivity of the parameter $\lambda$ to the execution time, we evaluate WB-Libra and WB-PG with different $\lambda$ values in the range of 1 to 1.00012 with a step size of 0.00001 for Md due to its large size and 1 to 1.0012 with a step size of 0.0001 for the rest graphs. The dotted blue line refers to the performance of W-PG and the dotted red line refers to the performance of W-Libra, which can be treated as upper bound for WB-PG and WB-Libra, respectively. There are times when the execution time of applications exceeds the upper bound indicated by the dotted lines because of frequent synchronization such as fetching data from memory and flushing dirty data into memory. In general, the trend is going up, indicating that increasing $\lambda$ causes the performance degradation. It is recommended to set $\lambda=1$ in WB-Libra and WB-PG to improve the execution time.

\section{Related Work}
Parallel computing enables the continued growth of complex applications \cite{asanovic2006landscape, asanovic2009view}. Most existing work \cite{murray2013naiad, fetterly2009dryadlinq, murray2011ciel} exploits the coarse-grained parallelism of the dataflow graphs where it is common to represent computations as nodes and data dependencies among them as edges. The work in~\cite{murray2013naiad} proposes a new computational model, timely dataflow, and captures opportunities for parallelism across different algorithms. Timely dataflow combines dataflow graphs with timestamps to allow vertices send and receive logically timestamped messages along directed edges for data parallel computation in a distributed cluster. \cite{fetterly2009dryadlinq} proposes DryadLINQ, a system for general-purpose data-parallel computation. The system architecture incorporates the dataflow graph representation of jobs with a centralized job manager to schedule jobs on clustered computers. \cite{murray2011ciel} introduces CIEL, a universal execution engine for distributed dataflow programs. It coordinates the distributed execution of a set of data-parallel tasks, and dynamically builds the DAG as tasks execute. Others develop different edge-cut graph partitioning algorithms in parallel computing such as spectral graph theory \cite{hendrickson1995improved}, hypergraph models \cite{hendrickson2000graph,devine2006parallel}, and a multi-level algorithm \cite{hendrickson1995multi}. Few \cite{xiao2017load,xiao2018prometheus} exploit the fine-grained instruction parallelism in high-level programs and propose community detection inspired optimization models to benefit from the underlying hardware such as multi-core platforms and processing-in-memory architectures.\\
\indent Related works in vertex cut are mainly from the distributed graph computing field, where vertex-cuts are used to partition large power-law graphs for optimizing the distributed execution of real applications such as \textit{PageRank}. The PowerGraph \cite{gonzalez2012powergraph} and Libra \cite{xie2014distributed} discussed in previous sections are two state-of-the-art works in this field. Some other relevant works include \cite{jain2013graphbuilder}, \cite{gonzalez2014graphx}, and \cite{chen2019powerlyra}.

\section{Conclusion}
In this paper, we explore IR instruction-level parallelism via graph partitioning on LLVM IR graphs and cluster-to-core mapping for optimizing the parallel execution of applications on multi-core systems. we propose a vertex cut-based framework on LLVM IR graphs for load balancing and parallel optimization of application execution on multi-core systems. Specifically, we formulate a new problem called Weight Balanced $p$-way Vertex Cut by incorporating the weights into the optimization of vertex-cut graph partitioning, and we provide greedy algorithms for solving this problem. We also introduce a memory-centric run-time mapping algorithm for mapping the graph clusters to the multi-core architecture. Our simulation results demonstrate the superior performance of the proposed framework for load balancing and multi-core execution speed-up compared to the state-of-the-art baselines.





\end{document}